\PassOptionsToPackage{hidelinks}{hyperref}
\documentclass[draftclsnofoot]{IEEEtran}

\setlength{\marginparwidth}{2cm}

\makeatletter
\let\NAT@parse\undefined
\makeatother


\usepackage{cite}
\usepackage{graphicx}
\usepackage{mathtools,amsthm,amssymb}
\usepackage{nicematrix}
\usepackage{enumitem}
\usepackage{array}
\usepackage{xcolor}
\usepackage{hyphenat}
\usepackage[utf8]{inputenc} 
\usepackage[T1]{fontenc}
\usepackage{lmodern}
\theoremstyle{plain}
\newtheorem{theorem}{Theorem}
\newtheorem{lemma}[theorem]{Lemma}
\newtheorem{proposition}[theorem]{Proposition}
\newtheorem{corollary}[theorem]{Corollary}
\theoremstyle{definition}
\newtheorem{definition}[theorem]{Definition}
\newtheorem{example}{Example}
\theoremstyle{remark}
\newtheorem{remark}{Remark}
\newcommand{\lie}{\mathrm{Lie}\,}
\newcommand{\tr}{\intercal}
\newcommand{\ev}{\mathrm{ev}_m}
\DeclareMathOperator{\rank}{\mathrm{rank}}
\DeclarePairedDelimiter\norm{\lVert}{\rVert}
\usepackage{easyReview}
\usepackage{hyperref,bookmark}

\begin{document}
\title{Bilinear Systems Induced by Proper Lie Group Actions}
\author{Gong~Cheng,
        Wei~Zhang, 
        and Jr-Shin~Li 
\thanks{The authors are with the Department of Electrical and Systems Engineering, Washington University, Saint Louis, MO 63130, USA (e-mail: \{gong.cheng, wei.zhang, jsli\}{@wustl.edu}).}}

\maketitle

\begin{abstract}
	In the study of induced bilinear systems, the classical Lie algebra rank condition (LARC) is known to be impractical since it requires computing the rank everywhere. On the other hand, the transitive Lie algebra condition, while more commonly used, relies on the classification of transitive Lie algebras, which is elusive except for few simple geometric objects such as spheres. We prove in this note that for bilinear systems induced by proper Lie group actions, the underlying Lie algebra is closely related to the orbits of the group action. Knowing the pattern of the Lie algebra rank over the manifold, we show that the LARC can be relaxed so that it suffices to check the rank at an arbitrary single point. Moreover, it removes the necessity for classifying transitive Lie algebras. Finally, this relaxed rank condition also leads to a characterization of controllable submanifolds by orbits.
\end{abstract}

\begin{IEEEkeywords}
  Bilinear system, Lie group, proper group action
\end{IEEEkeywords}

\section{Introduction}
\label{sec:intro}

Bilinear systems induced by Lie group actions have attracted great attention in control theory and engineering. Prominent examples range from excitation of spin systems in quantum physics \cite{Glaser98,Li_PRA06,Li_PNAS11} to control of rigid bodies in mechanics to manipulation of neuron oscillators in neuroscience \cite{Gerstner02,Dayan05}. In these examples, examining controllability of the considered systems is one of the most fundamental steps. However, the classical controllability condition, the Lie algebra rank condition (LARC), requires the examination on the rank of the underlying Lie algebra of the system at each point over the entire state space, thus impractical in most cases. On the other hand, it has been revealed that induced bilinear systems on homogeneous spaces are closely related to their counterparts on Lie groups, so that controllability of induced systems corresponds to transitivity of Lie group actions (or Lie subalgebras) \cite{Sachkov2009,Brockett73b}. Following this correspondence, many works have been done to study the controllability of induced systems using transitivity \cite{BOOTHBY/WILSON1979,BOOTHBY1975296}. One major obstacle in this approach is that it requires prior knowledge about the classification of transitive Lie subalgebras, which is only known for very few manifolds: Montgomery and Samelson \cite{Montgomery1943} classified all transitive Lie group actions on spheres in 1943, and Boothby \cite{BOOTHBY1975296,Elliott09} classified all transitive Lie algebras for punctured Euclidean spaces in 1975.

In this note, we investigate the Lie algebra rank and characterize its pattern from a group action viewpoint. By establishing a connection between the underlying Lie algebra of the induced system and the orbits of its corresponding group action, we show that the rank is constant on each orbit, given that the group action is proper. Consequently, we can relax the classical LARC such that it suffices to check the Lie algebra rank at an arbitrary single point, instead of over the entire state space. This relaxed condition applies to bilinear systems induced by any compact Lie group, as well as many non\hyp{}compact Lie groups such as the special Euclidean group $\mathrm{SE}(n)$. Our result simplifies the analysis of controllability of many induced bilinear systems, since classifications of transitive Lie algebras are no longer necessary. Furthermore, we also prove that for bilinear systems induced by proper Lie group actions that are not controllable, their controllable submanifolds can be fully characterized by the orbits of group actions.

The note is organized as follows: in Section~\ref{sec:pre} we provide an overview of the concepts of group actions, orbits, and induced bilinear systems. In Section~\ref{sec:son}, we formulate the relaxed rank condition for bilinear systems on $\mathbb{S}^{n-1}$ induced by $\mathrm{SO}(n)$. In Section~\ref{sec:proper.action}, we generalize our result to all bilinear systems induced by proper Lie group actions.

\section{Bilinear Systems Induced by Lie Group Actions}
\label{sec:pre}

The Lie algebra rank condition (LARC) marks an important milestone in geometric control theory aiming at increasing the accessibility to nonlinear systems by exploiting tools in modern mathematics, in particular, differential geometry and Lie theory. Technically, LARC reveals the connection between controllability of a nonlinear system and the rank of the Lie algebra generated by the vector fields which govern the system dynamics (See Appendix \ref{appd:LARC}). So far, LRAC remains one of the most widely-used criteria for examining controllability of nonlinear systems, but it also has its limitations. For example, as is shown in the motivating example below, systems with vector fields which generate a non-constant rank distribution are outside its scope.

Let us consider the bilinear system defined on $\mathbb{R}^n$ as
\begin{equation}
  \label{eq:induced.system.son}
  \dot{x}(t) = \Bigl(B_0+\sum_{i=1}^mu_i(t)B_i\Bigr)x(t), \quad
  x(0)=x_0\in\mathbb{R}^n\backslash\{0\},
\end{equation}
where $B_i\in\mathfrak{so}(n)$, the special orthogonal Lie algebra consisting of $n\times n$ anti-symmetric matrices, for all $i=0,\dots,m$. Note that every $B_i$ gives rise to an element $B_ix$ in $\mathfrak{X}(\mathbb{R}^n)$, the space of smooth vector fields on $\mathbb{R}^n$, and at any $x\in\mathbb{R}^n$ the tangent vector $B_ix\in \mathsf{T}_x\mathbb{R}^n$ satisfies the following identity
\[
  x\cdot{}(B_ix)=x^{\tr}B_ix=(x^{\tr}B_ix)^{\tr}=x^{\tr}(-B_i)x=0,
\]
where $\mathsf{T}_x\mathbb{R}^n$ denotes the tangent space of $\mathbb{R}^n$ at $x$, and ``$\cdot$'' is the Euclidean inner product. Geometrically, the above identity implies that the tangent vector $B_ix$ is always perpendicular to the vector $x$. Adopting this observation to the system in \eqref{eq:induced.system.son} yields
\[
  \frac{\mathrm{d}}{\mathrm{d}t}\|x(t)\|^2=2x(t)\cdot\dot x(t)=x(t)\cdot B_ix(t)=0.
\]
Namely, the Euclidean norm of the state variable is invariant under the system dynamics, so  the system is restricted to evolving on the sphere centered at the origin with radius $\|x_0\|$. Without loss of generality, we may assume for the initial condition $\|x_0\|=1$ and reduce the state space of the system to the $(n-1)$-dimensional unit sphere $\mathbb{S}^{n-1}$.


By the LARC, controllability of the system in \eqref{eq:induced.system.son} is determined by the rank of the Lie algebra $\mathfrak{L}=\lie\{B_{i}x:x\in\mathbb{S}^{n-1},\ i=0,\dots,m\}$. However, in general, $\mathfrak{L}$ is not of constant rank. For example, in the case of $n=4$, $m=3$, and $B_0=0$, 
\[
  B_1=
  \begin{pNiceMatrix}[columns-width=2mm]
    0 & & & \\
    & 0 & 1 & \\
    & -1 & 0 & \\
    & & & 0 \\
  \end{pNiceMatrix},\ %
  B_2=
  \begin{pNiceMatrix}[columns-width=2mm]
    0 & & & \\
    & 0 & & \\
    & & 0 & 1 \\ 
    & & -1 & 0 \\
  \end{pNiceMatrix},
\]
since $[B_1x,B_2x]=-[B_1,B_2]x$, we have that
\[
  \mathfrak{L}=\lie\{B_1x,B_2x\}=\mathrm{span}\,\{B_1x,B_2x,[B_1,B_2]x\},
\]
where
\[
  [B_1,B_2]=
  \begin{pNiceMatrix}[columns-width=2mm]
    0 & & & \\
    & 0 & & 1 \\
    & & 0 & \\
    & -1 & & 0 \\
  \end{pNiceMatrix}.
\]
We can check that at $x_1=(1,0,0,0)^\tr$, $B_1x_1=B_2x_1=[B_1x_1,B_2x_1]=0$, so $\rank\mathfrak{L}{}=0$ at $x_1$. In comparison, at $x_2=(0,1,0,0)^\tr$, $\mathfrak{L}$ is the span of $(0,0,-1,0)^{\tr}$ and $(0,0,0,-1)^{\tr}$, so $\rank\mathfrak{L}=2$ at $x_2$. As a result, the Lie algebra $\mathfrak{L}\subset\mathfrak{X}(\mathbb{S}^3)$ gives a non-constant rank distribution on $\mathbb{S}^3$. This poses a problem for determining the controllable submanifold of the system by using LARC. 

To elaborate this non-constant rank phenomenon from an algebraic viewpoint, we note that $\mathbb{S}^{n-1}$ is an ${\rm SO}(n)$-homogeneous space, so that the vector fields $\{B_ix:x\in\mathbb{S}^{n-1},\ i=0,\dots,m\}$ which govern the dynamics of the system in \eqref{eq:induced.system.son} are generated by the $\mathfrak{so}(n)$-action on $\mathbb{R}^n$. Equivalently, the system in \eqref{eq:induced.system.son} is induced by the Lie group action of SO$(n)$ on $\mathbb{R}^n$. However, the SO$(n)$-action on $\mathbb{S}^{n-1}$ is not free, and correspondingly, the $\mathfrak{so}(n)$-action as a Lie algebra homomorphism from $\mathfrak{so}(n)$ to $\mathfrak{X}(\mathbb{S}^{n-1})$ is not injective. So different elements in $\mathfrak{so}(n)$ may give the same vector fields on $\mathbb{S}^{n-1}$, which  result in the drop of the rank of the Lie algebra generated by $B_ix$. 


In what follows, to tackle the issue of non-constant rank, we integrate the theory on Lie group actions with geometric control theory to establish sufficient and necessary controllability conditions for bilinear systems induced by Lie group actions on smooth manifolds. The basics of Lie groups actions, homogeneous spaces and quotient manifolds that are necessary to our development are reviewed in Appendix~\ref{appd:group_action}.

\subsection*{Controllability of Induced Systems}

Next, we introduce the theory that connects induced bilinear systems to bilinear systems on Lie groups. First, we give the formal definition of induced systems.


\begin{definition}[Induced Systems]
  \label{def:induced.system}
  Let $G$ be a connected Lie group, $\mathfrak{g}$ its Lie algebra, and $M$ a smooth manifold which admits a left $G$-action. Given a right\hyp{}invariant bilinear system on $G$,
  \begin{equation}
    \label{eq:bilinear.system}
    \dot{X}=\Bigl(B_0+\sum_{i=1}^mu_i(t)B_i\Bigr)X(t), \quad
    B_i\in\mathfrak{g}, \quad X(0)=I,
  \end{equation}
  it induces a bilinear system on $M$ in the form of 
  \begin{equation}
    \label{eq:induced.system}
    \dot{x}(t)=(\theta_{\ast}B_0)(x)+\sum_{k=1}^mu_i(t)(\theta_{\ast}B_i)(x),
    \quad x(0)\in{}M.
  \end{equation}
  where $\theta_{\ast}B_i$ is the fundamental vector field of $B_i$ (see Definition~\ref{def:fundamental.vector.field}). The bilinear system in \eqref{eq:induced.system} is called the \emph{induced system} of \eqref{eq:bilinear.system}.
\end{definition}

The system in \eqref{eq:induced.system} is named as ``induced'' because its drifting and control vector fields are generated from those of the system in \eqref{eq:bilinear.system}, which reflect the action of $G$ on $M$ from the infinitesimal viewpoint. It is then natural that, other than the controllability of the system in \eqref{eq:bilinear.system} on $G$, the properties of the action of $G$ on $M$ also impacts on controllability of the induced system in \eqref{eq:induced.system}.


\begin{lemma}[Controllability of Induced Systems]
  \label{lem:controllability.induced}
  Let $M$ be a homogeneous $G$-space, i.e., $G$ acts transitively on $M$. Suppose that the bilinear system in \eqref{eq:bilinear.system} is controllable on $G$, or its reachable set $\mathcal{A}$, which is a semigroup, acts transitively on $M$, then the induced system in \eqref{eq:induced.system} is controllable on $M$.
\end{lemma}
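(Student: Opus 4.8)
The plan is to relate trajectories of the induced system on $M$ to trajectories of the lifted system on $G$ through the orbit map, and then transfer reachability information from $G$ down to $M$. Fix a base point $x_0 \in M$ and consider the orbit map $\theta_{x_0}\colon G \to M$, $g \mapsto g\cdot x_0$. The heart of the argument is the claim that for any admissible control $u(\cdot)$, if $X(\cdot)$ solves the lifted system \eqref{eq:bilinear.system} with $X(0)=I$, then $x(t) := X(t)\cdot x_0 = \theta_{x_0}(X(t))$ solves the induced system \eqref{eq:induced.system} with $x(0)=x_0$, driven by the \emph{same} control.

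First I would establish the infinitesimal version of this correspondence: the orbit map $\theta_{x_0}$ intertwines the right-invariant vector field $g \mapsto Bg$ on $G$ with the fundamental vector field $\theta_{\ast}B$ on $M$. Concretely, differentiating $\theta_{x_0}(\exp(tB)g) = \exp(tB)\cdot(g\cdot x_0)$ at $t=0$ gives $(\mathrm{d}\theta_{x_0})_g(Bg) = (\theta_{\ast}B)(\theta_{x_0}(g))$, so the two fields are $\theta_{x_0}$-related. Applying this pointwise with the time-dependent generator $A(t) = B_0 + \sum_i u_i(t)B_i$ and using linearity of $B \mapsto \theta_{\ast}B$, the chain rule yields $\dot x(t) = (\mathrm{d}\theta_{x_0})_{X(t)}(A(t)X(t)) = (\theta_{\ast}B_0)(x(t)) + \sum_i u_i(t)(\theta_{\ast}B_i)(x(t))$, which is exactly \eqref{eq:induced.system}, proving the claim.

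Next I would pass to reachable sets. Denote by $\mathcal{A}\subseteq G$ the reachable set of \eqref{eq:bilinear.system} from $I$; by the claim, every point of $\mathcal{A}\cdot x_0$ is attained from $x_0$ by some trajectory of the induced system, so $\mathcal{A}\cdot x_0$ is contained in the reachable set of \eqref{eq:induced.system} from $x_0$. Under the first hypothesis, controllability on $G$ means $\mathcal{A} = G$, and transitivity of the $G$-action gives $\mathcal{A}\cdot x_0 = G\cdot x_0 = M$; under the second hypothesis, $\mathcal{A}\cdot x_0 = M$ directly because $\mathcal{A}$ acts transitively. Since $x_0$ was arbitrary, the reachable set from every point of $M$ equals $M$, establishing controllability of the induced system.

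I expect the main obstacle to be the equivariance computation in the second paragraph: one must be careful that the lifted system is right-invariant (so its trajectories are genuinely integral curves of $g\mapsto Bg$), that the orbit map is the one for a \emph{left} action, and that the sign and ordering conventions in the definition of $\theta_{\ast}B$ are matched, so that it is the right-invariant fields — not the left-invariant ones — that push forward to the fundamental fields. A secondary point to verify is that $\mathcal{A}$ is genuinely a semigroup (closure under concatenation of trajectories, using $X(0)=I$ together with right-invariance), which is what legitimizes interpreting $\mathcal{A}\cdot x_0$ as a reachable set and guarantees the conclusion holds from every initial point rather than from the single chosen $x_0$.
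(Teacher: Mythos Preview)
Your proposal is correct. The paper itself does not give an argument for this lemma; it simply cites \cite[Theorem~5.1]{Sachkov2009}. What you have written is precisely the standard direct proof underlying that reference: push trajectories of the right-invariant system on $G$ down through the orbit map $\theta_{x_0}$, verify via the $\theta_{x_0}$-relatedness of $g\mapsto Bg$ and $\theta_{\ast}B$ that the image curve solves the induced system with the same control, and conclude that the reachable set from $x_0$ contains $\mathcal{A}\cdot x_0$, which equals $M$ under either hypothesis. The caveats you flag about matching the right-invariance of \eqref{eq:bilinear.system} with the left action on $M$, and about $\mathcal{A}$ being a sub-semigroup of $G$ (so that ``$\mathcal{A}$ acts transitively'' makes sense and the conclusion holds from every base point), are exactly the points one must check, and they go through without difficulty in the present setting.
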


\begin{proof}
  See \cite[Theorem~5.1]{Sachkov2009}.
\end{proof}

On the other hand, since all Lie groups and their homogeneous spaces are analytic, the following lemma provides a necessary condition for controllability of induced systems. In general, for a system
\begin{equation}
  \label{eq:system.rank.theorem}
  \dot{x}(t)=a_0(x)+\sum_{i=1}^{m}u_i(t)a_i(x)
\end{equation}
on an analytic manifold $M$ with $a_i(x)\in \mathfrak{X}(M)$ being analytic on $M$, we have the following result.

\begin{lemma}[See {\cite[Theorem~1]{ELLIOTT1971364}}]
  \label{lem:a.consequence.of.controllability}
  Suppose $M$ is an analytic manifolds of dimension $k$, and that its fundamental group $\pi_1(M)$ has no elements of infinite order. If a system in \eqref{eq:system.rank.theorem} is controllable on $M$, then the Lie subalgebra $\mathfrak{A}$, generated by $\{Y_{0}, a_1, \dots, a_{m}\}$ where $Y_{0}$ has the local form as
  \[
    Y_{0}=\frac{\partial}{\partial{}t}+\sum_{j}a^{j}_{0}\frac{\partial}{\partial{}x_{j}},
  \]
  has rank $k+1$ on the space $M\times\mathbb{R}$.
\end{lemma}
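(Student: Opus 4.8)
The plan is to prove the contrapositive in the language of the extended (time-augmented) system on $M\times\mathbb{R}$, whose drift $Y_0=\partial/\partial t+\sum_j a_0^j\,\partial/\partial x_j$ and control fields $a_1,\dots,a_m$ have coefficients that are \emph{independent of the extra coordinate} $t$; this translation invariance is the structural fact that drives the whole argument. Suppose, for contradiction, that $\rank\mathfrak{A}<k+1$ at some point; by translation invariance I may take this point to be $(x_0,0)$. Since all the generators of $\mathfrak{A}$ are analytic, the analytic orbit theorem (Nagano--Sussmann) applies: the orbit $\mathcal{O}$ of the extended family through $(x_0,0)$ is an immersed analytic submanifold whose tangent space at every point is $\mathfrak{A}$ evaluated there, so $\dim\mathcal{O}=\rank\mathfrak{A}\le k$, and every trajectory of the extended system issued from $(x_0,0)$ stays in $\mathcal{O}$.

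Next I would bring in controllability. Controllability on $M$ implies accessibility, and for an analytic system accessibility is equivalent to the full-rank condition $\dim\mathfrak{B}(x)=k$ for all $x$, where $\mathfrak{B}$ is the Lie algebra generated by $a_0,\dots,a_m$ on $M$. Since the projection $\pi\colon M\times\mathbb{R}\to M$ pushes $Y_0,a_1,\dots,a_m$ (and their brackets) forward to $a_0,a_1,\dots,a_m$ (and theirs), the differential $d\pi$ maps $T\mathcal{O}=\mathfrak{A}$ onto $\mathfrak{B}=TM$; hence $\pi|_{\mathcal{O}}$ is a submersion, forcing $\dim\mathcal{O}\ge k$. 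Combined with the previous paragraph, $\dim\mathcal{O}=k$ and $\pi|_{\mathcal{O}}\colon\mathcal{O}\to M$ is a surjective local diffeomorphism (surjective because the reachable set of $x_0$, which projects from $\mathcal{O}$, is all of $M$).

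The heart of the proof is then a period computation. Because every generator of $\mathfrak{A}$ has $t$-independent coefficients, the distribution $\mathfrak{A}$ is invariant under all time translations $(x,t)\mapsto(x,t+c)$; consequently, over each point of $M$ the various sheets of $\mathcal{O}$ are local graphs $t=\sigma(x)$ that differ only by additive constants. Therefore $dt$ descends to a single globally defined closed $1$-form $\eta$ on $M=\pi(\mathcal{O})$ with $\pi^{*}\eta=dt|_{\mathcal{O}}$. Now I invoke controllability once more: from $x_0$ one can reach $x_0$ in some positive time $\tau$ (reach an intermediate point and return), and the corresponding extended trajectory is a lift into $\mathcal{O}$ along which $t$ increases by exactly $\tau$. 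Projecting gives a loop $\gamma_0$ in $M$ with $\oint_{\gamma_0}\eta=\tau>0$. Thus the de Rham period homomorphism $\pi_1(M)\to\mathbb{R}$, $[\gamma]\mapsto\oint_{\gamma}\eta$, is nontrivial; since $\mathbb{R}$ is torsion-free, $[\gamma_0]^{\,n}$ maps to $n\tau\ne0$ for all $n\ne0$, so $[\gamma_0]$ has infinite order in $\pi_1(M)$, contradicting the hypothesis. Hence $\rank\mathfrak{A}=k+1$ everywhere.

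I expect the main obstacle to be the geometric bookkeeping in the middle step: justifying rigorously that the immersed orbit $\mathcal{O}$ furnished by the orbit theorem maps to $M$ as a genuine (local-diffeomorphic, section-like) covering, so that $dt$ descends to a single-valued closed form and lifts of loops are unambiguous. A surjective local diffeomorphism need not be a covering without a properness or path-lifting input, and here one must extract that input from the constant-rank (Frobenius) foliation structure together with the fact that control trajectories supply the required path lifts. The analytic accessibility-rank equivalence and the reduction from ``deficient rank somewhere'' to a single well-chosen basepoint via translation invariance are comparatively routine by contrast.
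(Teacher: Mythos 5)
Your proposal is correct, and it is worth noting at the outset that the paper itself supplies no proof of this lemma: it is quoted verbatim from Elliott's 1971 paper (the citation \cite{ELLIOTT1971364}), so the only honest comparison is with Elliott's original argument. Your reconstruction is essentially that argument: pass to the time-augmented system on $M\times\mathbb{R}$, use the Nagano--Sussmann orbit theorem to identify the orbit $\mathcal{O}$ through $(x_0,0)$ as a $k$-dimensional leaf projecting submersively onto $M$ (full rank of $\lie\{a_0,\dots,a_m\}$ on $M$ being forced by controllability plus analyticity), and then extract from a return trajectory $x_0\rightsquigarrow x_0$ of duration $\tau>0$ an element of infinite order in $\pi_1(M)$. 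Where Elliott phrases the last step as a covering-space/deck-transformation count on the fiber $\{0,\tau,2\tau,\dots\}$ over $x_0$, you phrase it as a de Rham period homomorphism $\pi_1(M)\to(\mathbb{R},+)$ attached to a closed $1$-form $\eta$ with $\pi^{*}\eta=dt|_{\mathcal{O}}$; the two are equivalent, but your version is arguably cleaner. Indeed, the obstacle you flag in your last paragraph (that a surjective local diffeomorphism need not be a covering, so lifts of loops may be ambiguous) largely dissolves in your own formulation: since the distribution $\mathfrak{A}$ is $t$-translation invariant and of constant rank $k$ with $d\pi|_{\mathfrak{A}}$ an isomorphism onto $TM$, at each $x$ it is the graph of a unique linear functional $\eta_x$ on $\mathsf{T}_xM$, which defines $\eta$ globally without reference to any particular sheet of $\mathcal{O}$ (closedness then follows from involutivity via Frobenius), and the only lift you ever integrate over is the explicit extended control trajectory, so no general path-lifting property is required. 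One should also record the hypotheses that $M$ is connected and $k\geqslant 1$ so that the return time $\tau$ is strictly positive, but these are implicit in the setting. In short: the proof is complete and faithful to the cited source, modulo a cosmetic repackaging of the final homotopy argument.
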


Since the vector fields $a_i(x)$ are time\hyp{}invariant and commute with the differential operator $\partial/\partial t$, if $\rank \mathfrak{A} = k-1$, then the Lie algebra generated by $\{a_0, \ldots, a_m\}$ must have rank $k$. So if we let $\iota:M\hookrightarrow M\times\mathbb{R}$ to denote the the inclusion map and taking $a_i = \iota_{\ast}\theta_{\ast}B_i$, we conclude the lemma below.

\begin{lemma}
  \label{lem:induced.rank.necessary}
  If the induced system in \eqref{eq:induced.system} is controllable on $M$, then the Lie algebra generated by the fundamental vector fields $\{\theta_{\ast}B_i,\ {}i=0,\dots,m\}$ is of constant rank $k$ on $M$.
\end{lemma}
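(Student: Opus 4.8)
The plan is to obtain this necessary condition as a direct corollary of the analytic controllability criterion in Lemma~\ref{lem:a.consequence.of.controllability}, followed by a rank-reduction argument that transfers the full-rank conclusion from the time-augmented space $M\times\mathbb{R}$ back down to $M$. Since the heavy lifting is already packaged in Elliott's theorem, which I am free to cite, the real work is the transfer step.

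First I would verify the hypotheses of Lemma~\ref{lem:a.consequence.of.controllability}. Writing $a_i=\theta_{\ast}B_i$ for $i=0,\dots,m$, these are fundamental vector fields of a $G$-action on a homogeneous space, hence analytic, and $M$ is an analytic manifold of dimension $k$ whose $\pi_1(M)$ is assumed to contain no element of infinite order. Controllability of the induced system~\eqref{eq:induced.system} then lets me apply the lemma to the augmented system on $M\times\mathbb{R}$ with drift $Y_0=\partial/\partial t+a_0$ and controls $a_1,\dots,a_m$, so that $\mathfrak{A}=\lie\{Y_0,a_1,\dots,a_m\}$ has rank $k+1$ at every point of $M\times\mathbb{R}$.

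The core of the argument is the rank reduction. At a point $(x,t)$ I would split $\mathsf{T}_{(x,t)}(M\times\mathbb{R})=\mathsf{T}_xM\oplus\mathbb{R}\,\partial/\partial t$. Because each $a_i$ is time-invariant, $[\partial/\partial t,a_i]=0$, so $[Y_0,Z]=[a_0,Z]$ for every $t$-independent $Z$; consequently every iterated bracket of the generators of $\mathfrak{A}$ in which $Y_0$ participates reduces to a bracket of the form $[a_0,\cdot]$ and carries no $\partial/\partial t$ component. Thus $Y_0$ is the only generator with a $\partial/\partial t$ component, the subspace $\mathfrak{k}\subseteq\mathfrak{A}$ of vector fields with vanishing $\partial/\partial t$ component is an ideal contained in $\mathfrak{h}:=\lie\{a_0,\dots,a_m\}\subseteq\mathfrak{X}(M)$, and $\mathfrak{A}=\mathbb{R}\,Y_0\oplus\mathfrak{k}$ as vector spaces. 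Evaluating at $(x,t)$ and using that $Y_0(x,t)$ is the unique vector with nonzero $\partial/\partial t$ component gives the clean splitting $\dim\mathfrak{A}(x,t)=1+\dim\mathfrak{k}(x)$. From $\rank\mathfrak{A}=k+1$ I get $\dim\mathfrak{k}(x)=k$, and since $\mathfrak{k}(x)\subseteq\mathsf{T}_xM$ with $\dim\mathsf{T}_xM=k$ this forces $\mathfrak{k}(x)=\mathsf{T}_xM$. As $\mathfrak{k}(x)\subseteq\mathfrak{h}(x)\subseteq\mathsf{T}_xM$, I conclude $\mathfrak{h}(x)=\mathsf{T}_xM$, i.e.\ $\rank\mathfrak{h}=k$ at every $x\in M$; being equal to $\dim M$ at each point, this rank is automatically constant. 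Finally, identifying $a_i=\iota_{\ast}\theta_{\ast}B_i$ shows $\mathfrak{h}=\lie\{\theta_{\ast}B_i\}$, the Lie algebra in the statement.

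I expect the main obstacle to be the bracket bookkeeping in the reduction, namely establishing rigorously that no iterated bracket outside $\mathbb{R}\,Y_0$ produces a $\partial/\partial t$ component, so that the rank of $\mathfrak{A}$ splits exactly as $1+\dim\mathfrak{k}(x)$; once this splitting is secured, the dimension count is immediate. A secondary point worth flagging is that Lemma~\ref{lem:a.consequence.of.controllability} carries the topological hypothesis on $\pi_1(M)$, which is not restated in the present lemma and must be assumed to hold for the homogeneous spaces under consideration.
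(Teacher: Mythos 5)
Your proposal is correct and follows essentially the same route as the paper: invoke Lemma~\ref{lem:a.consequence.of.controllability} on the time-augmented system and use the fact that the time-invariant fields $a_i$ commute with $\partial/\partial t$ to drop the rank from $k+1$ on $M\times\mathbb{R}$ to $k$ on $M$. Your bracket bookkeeping and the splitting $\mathfrak{A}=\mathbb{R}\,Y_0\oplus\mathfrak{k}$ just make explicit what the paper compresses into one sentence (which even contains a typo, ``$\rank\mathfrak{A}=k-1$'' for $k+1$), and your flag about the unstated $\pi_1(M)$ hypothesis is a fair observation about the paper's own statement.
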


\section{Bilinear Systems Induced by \texorpdfstring{{$\mathrm{SO}(n)$}}{SO(n)}-Actions and the Rank Condition} 
\label{sec:son}

In this section, we examine the bilinear system in \eqref{eq:induced.system.son} using the results of group actions that we introduced in Section~\ref{sec:pre}, and prove a sufficient and necessary controllability condition. First, we should emphasize that the system in \eqref{eq:induced.system.son} is an induced system. Indeed, for a natural group action of any matrix group $G$ on $\mathbb{R}^n$ (i.e., matrix multiplication), its fundamental vector fields on $\mathbb{R}^n$ are in the form of products: for any $B\in\mathfrak{g}$, let $(\theta_{\ast}B)$ denote the fundamental vector field on $\mathbb{R}^n$ corresponding to $B$, then by Definition~\ref{def:fundamental.vector.field} we have
\begin{equation}
  \label{eq:fundamental.vector.field(matrix)}
  \begin{aligned}
    (\theta_{\ast}B)(x) &=\frac{\mathrm{d}}{\mathrm{d}t}\Big|_{t=0}\exp(tB)x \\
    &= \Bigl(\frac{\mathrm{d}}{\mathrm{d}t}\Big|_{t=0}\exp(tB)\Bigr)x=Bx.
  \end{aligned}
\end{equation}
So according to Definition~\ref{def:induced.system}, the system in \eqref{eq:induced.system.son} is induced by the following bilinear system on $\mathrm{SO}(n)$:
\begin{equation}
  \label{eq:bilinear.system.son}
  \dot{X}=\Bigl(B_0+\sum_{i=1}^mu_i(t)B_i\Bigr)X(t), \quad X(0)=I.
\end{equation}

Recall from Lemma~\ref{lem:controllability.induced} that controllability of an induced systems is related to controllability of its counterpart on a Lie group. Therefore, the system in \eqref{eq:induced.system.son} on $\mathbb{S}^{n-1}$ is controllable if the bilinear system in \eqref{eq:bilinear.system.son} is controllable on $\mathrm{SO}(n)$, or if its controllable submanifold $H$, which is a closed Lie subgroup of $\mathrm{SO}(n)$ with its Lie subalgebra being $\mathfrak{h}:=\lie\{B_i\}\leqslant{}\mathfrak{so}(n)$, acts transitively on $\mathbb{S}^{n-1}$. So the controllability of \eqref{eq:induced.system.son} depends on the transitivity of the action by $H$, and for that, we introduce a rank condition that connects the transitivity of $H$-action to the rank of $\mathfrak{L}:=\lie\{B_ix\}$, the underlying Lie algebra of \eqref{eq:induced.system.son}. We will see that, in this case, it suffices to check the rank of $\mathfrak{L}$ at \emph{any single} point $x$ on the sphere $\mathbb{S}^{n-1}$.

\begin{definition}
  \label{def:rank.of.lie.algebra}
  Let $G$ be a matrix group acting naturally on $\mathbb{R}^n$, and $\mathfrak{g}$ its Lie algebra. The \emph{rank} of $\mathfrak{g}$ at a point $x\in\mathbb{R}^n$ is defined as the dimension of the subspace $\{Ax:A\in\mathfrak{g}\}\subseteq\mathsf{T}_{x}\mathbb{R}^n$, i.e.,
  \[
    \rank_{x}\mathfrak{g}:=\dim\{Ax:A\in\mathfrak{g}\}.
  \]
\end{definition}

Note that by Definition~\ref{def:rank.of.lie.algebra}, the rank of the Lie algebra $\mathfrak{L}$ generated by vector fields $\lie\{B_ix\}$ coincides with $\rank_{x}\mathfrak{h}$ at all $x$, where $\mathfrak{h}=\lie\{B_i\}$. The next lemma shows a group action of a subgroup of $\mathrm{SO}(n)$ is transitive if its Lie algebra has the \emph{maximal} rank.

\begin{lemma}
  \label{lem:transitive.lie.algebra}
  Let $\mathfrak{h}$ be a Lie subalgebra of $\mathfrak{so}(n)$ which corresponds to a closed Lie subgroup $H\leqslant{}\mathrm{SO}(n)$. The (natural) group action of $H$ on $\mathbb{S}^{n-1}$ is \emph{transitive} if and only if $\rank_{x}\mathfrak{h}=n-1$ for some $x\in\mathbb{S}^{n-1}$.
\end{lemma}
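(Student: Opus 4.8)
The plan is to identify $\rank_x\mathfrak{h}$ with the dimension of the $H$-orbit through $x$, and then to exploit compactness of $H$ together with connectedness of the sphere. The starting point is the observation that the subspace $\{Ax:A\in\mathfrak{h}\}\subseteq\mathsf{T}_x\mathbb{R}^n$ appearing in Definition~\ref{def:rank.of.lie.algebra} is precisely the image of the differential at the identity of the orbit map $H\to\mathbb{R}^n$, $g\mapsto gx$; by the matrix computation in \eqref{eq:fundamental.vector.field(matrix)} this differential sends $A\in\mathfrak{h}$ to $Ax$. Consequently $\{Ax:A\in\mathfrak{h}\}=\mathsf{T}_x(H\cdot x)$, so that $\rank_x\mathfrak{h}=\dim(H\cdot x)$. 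Since each $A\in\mathfrak{so}(n)$ is anti-symmetric, the identity $x^{\tr}Ax=0$ shows $Ax\perp x$, hence $\{Ax:A\in\mathfrak{h}\}\subseteq\mathsf{T}_x\mathbb{S}^{n-1}$ and the orbit $H\cdot x$ stays on the sphere. In particular $\rank_x\mathfrak{h}\le n-1$ always, with equality exactly when the orbit is a full-dimensional submanifold of $\mathbb{S}^{n-1}$.

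For the forward direction, if $H$ acts transitively then $H\cdot x=\mathbb{S}^{n-1}$ for every $x$, so the orbit has dimension $n-1$ and $\rank_x\mathfrak{h}=n-1$ at every point, in particular at some point.

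For the converse, suppose $\rank_x\mathfrak{h}=n-1$ at some $x\in\mathbb{S}^{n-1}$, so that the orbit $H\cdot x$ has dimension $n-1$. Because $H$ is a closed subgroup of the compact group $\mathrm{SO}(n)$, $H$ is itself compact, and the orbit of a compact group action is an embedded, compact submanifold of $\mathbb{S}^{n-1}$ (see Appendix~\ref{appd:group_action}). An embedded submanifold whose dimension equals that of the ambient manifold is an open subset; being the continuous image of the compact group $H$, the orbit is also closed. Since $\mathbb{S}^{n-1}$ is connected and the orbit is nonempty, open, and closed, it must be all of $\mathbb{S}^{n-1}$, which is precisely transitivity of the $H$-action.

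The crux is the identification $\{Ax:A\in\mathfrak{h}\}=\mathsf{T}_x(H\cdot x)$ together with the fact that the orbit is an \emph{embedded} submanifold rather than merely an immersed one; this is exactly where compactness (equivalently, properness) of $H$ is needed, since it guarantees that the open-and-closed argument applies to an honest submanifold and forces the orbit to exhaust the connected sphere. Without properness the orbit could fail to be embedded and a full-dimensional orbit need not be open, so the single-point rank check would not suffice.
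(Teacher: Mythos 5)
Your proof is correct and follows essentially the same route as the paper: identify $\rank_x\mathfrak{h}$ with $\dim(H\cdot x)$ via the tangent space of the orbit, then use that a full-dimensional orbit is open while closedness (from compactness of $H$, equivalently properness of the action, as in Lemma~\ref{lem:orbits.of.a.proper.action}) plus connectedness of $\mathbb{S}^{n-1}$ forces the orbit to be the whole sphere. The only cosmetic difference is that you invoke compactness of $H$ directly where the paper phrases the same fact through properness of the $H$-action.
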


\begin{proof} 
  As we will show in Lemma~\ref{lem:constant.rank.on.orbits}, the rank of $\mathfrak{h}$ at $x$ coincides with the dimension of the orbit $H(x)\subset\mathbb{S}^{n-1}$. Suppose $H$ acts transitively on $\mathbb{S}^{n-1}$, i.e., $H(x)=\mathbb{S}^{n-1}$ for any $x\in\mathbb{S}^{n-1}$, so $\rank_{x}\mathfrak{h}=\dim{}H(x)=n-1$. On the other hand, if for some $x\in\mathbb{S}^{n-1}$ we have $\rank_{x}\mathfrak{h}=n-1$, then the orbit $H(x)\subseteq\mathbb{S}^{n-1}$, as a submanifold of dimension $n-1$, is open in $\mathbb{S}^{n-1}$. To show that the $H$-action is transitive (equivalently, $H(x)=\mathbb{S}^{n-1}$), it remains to show that $H(x)$ is also closed. This is because the $H$-action is proper for any closed $H\leqslant\mathrm{SO}(n)$, so its orbit $H(x)$ is closed in $\mathbb{S}^{n-1}$, by Lemma~\ref{lem:orbits.of.a.proper.action}. Therefore, since $\mathbb{S}^{n-1}$ is connected, we conclude that $H(x)=\mathbb{S}^{n-1}$ if $\rank_{x}\mathfrak{h}=n-1$.
\end{proof}

Following Lemma~\ref{lem:controllability.induced} and Lemma~\ref{lem:transitive.lie.algebra}, the system in \eqref{eq:induced.system.son} is controllable if $\rank_{x}\mathfrak{h}=n-1$ for some $x\in\mathbb{S}^{n-1}$, where $\mathfrak{h}=\lie\{B_i\}$. Conversely, we learn from Lemma~\ref{lem:induced.rank.necessary} that the rank condition is also sufficient. Combined with Lemma~\ref{lem:controllability.induced} and \ref{lem:transitive.lie.algebra}, we conclude that, since $\pi_1(\mathbb{S}^{n-1})=\{0\}$ for $n\geqslant{}3$, the system in \eqref{eq:induced.system.son} is controllable if and only if $\rank_{x}\mathfrak{h}=n-1$ for some $x\in\mathbb{S}^{n-1}$, where $\mathfrak{h}=\lie\{B_i\}$. When $n=2$, since the unit circle $\mathbb{S}^{1}$ is one-dimensional, controllability naturally guarantees the rank condition. In summary, we have the following theorem for controllability of the system in \eqref{eq:induced.system.son}.

\begin{theorem}
  \label{thm:equivalent.controllability.induced}
  Let $\mathfrak{h}:=\lie\{B_0,\ldots,B_m\}$ be a Lie subalgebra of $\mathfrak{so}(n)$ and $H\leqslant{}\mathrm{SO}(n)$ be a closed Lie subgroup whose Lie algebra equals to $\mathfrak{h}$. The following statements are equivalent:
  \begin{enumerate}[font=\normalfont,label={(\arabic*)}]%
    \item the system in \eqref{eq:induced.system.son} is controllable on $\mathbb{S}^{n-1}$;
    \item $\rank_{x}\mathfrak{h}=n-1$ for some $x\in\mathbb{S}^{n-1}$;
    \item $H$ acts transitively on $\mathbb{S}^{n-1}$.
  \end{enumerate}
\end{theorem}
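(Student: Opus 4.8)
The plan is to establish the cycle of implications $(1)\Rightarrow(2)\Rightarrow(3)\Rightarrow(1)$, from which the three-way equivalence follows. The step $(2)\Leftrightarrow(3)$ is already supplied verbatim by Lemma~\ref{lem:transitive.lie.algebra}, so the genuine work is to bridge the dynamical statement $(1)$ with the algebraic one $(2)$ (equivalently, the geometric one $(3)$). I would therefore concentrate on the two implications $(1)\Rightarrow(2)$ and $(3)\Rightarrow(1)$, using $(2)\Leftrightarrow(3)$ freely.

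For $(3)\Rightarrow(1)$, the key is to lift transitivity of $H$ on $\mathbb{S}^{n-1}$ to controllability of the induced system through Lemma~\ref{lem:controllability.induced}. Since $H$ is a closed subgroup of the compact group $\mathrm{SO}(n)$, its identity component $H_0$ is compact, connected, and has the same Lie algebra $\mathfrak{h}=\lie\{B_0,\dots,B_m\}$. Viewing the right-invariant system~\eqref{eq:bilinear.system.son} as a system on $H_0$, the Lie algebra it generates is all of $\mathfrak{h}$, so it satisfies the LARC on $H_0$; by the classical controllability criterion for right-invariant systems on compact connected Lie groups, the reachable set from the identity is all of $H_0$. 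Hypothesis $(3)$, via Lemma~\ref{lem:transitive.lie.algebra}, forces $\rank_x\mathfrak{h}=n-1$, so the orbit $H_0(x)$ is open (of full dimension $n-1$) and, by properness of the action, also closed; connectedness of $\mathbb{S}^{n-1}$ then gives $H_0(x)=\mathbb{S}^{n-1}$, i.e.\ $H_0$ already acts transitively. Applying Lemma~\ref{lem:controllability.induced} with $G=H_0$ and $M=\mathbb{S}^{n-1}$ yields controllability of~\eqref{eq:induced.system.son}.

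For $(1)\Rightarrow(2)$, I would split on the dimension of the sphere. When $n\geqslant 3$, the sphere $\mathbb{S}^{n-1}$ is simply connected, so $\pi_1(\mathbb{S}^{n-1})$ contains no element of infinite order and Lemma~\ref{lem:induced.rank.necessary} applies directly: controllability forces the Lie algebra generated by the fundamental vector fields $\theta_{\ast}B_i=B_ix$ to have constant rank $n-1$ on $\mathbb{S}^{n-1}$. By the remark following Definition~\ref{def:rank.of.lie.algebra} this rank equals $\rank_x\mathfrak{h}$, so $\rank_x\mathfrak{h}=n-1$ holds at every point, in particular at some $x$. The case $n=2$ must be handled separately, because $\pi_1(\mathbb{S}^1)=\mathbb{Z}$ has elements of infinite order and Lemma~\ref{lem:induced.rank.necessary} is inapplicable; here one argues directly that controllability on the one-dimensional circle forces the fundamental vector fields to be nonzero at some point, whence $\rank_x\mathfrak{h}=1=n-1$ there.

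The step I expect to be the main obstacle is $(3)\Rightarrow(1)$: because the group-level system~\eqref{eq:bilinear.system.son} carries a drift $B_0$, the rank identity $\lie\{B_i\}=\mathfrak{h}$ gives only accessibility, not controllability, on $H_0$ in general. Upgrading accessibility to controllability genuinely uses the compactness of $\mathrm{SO}(n)$, which guarantees that the reachable set of a right-invariant system is a subgroup rather than a proper subsemigroup; this is precisely where the compactness (a special case of the paper's ``proper action'' theme) enters at the group level. A secondary subtlety is the low-dimensional exception $n=2$, where the necessity lemma fails and the rank condition must be recovered by an ad hoc argument on the circle.
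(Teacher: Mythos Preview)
Your proposal is correct and follows essentially the same route as the paper: $(2)\Leftrightarrow(3)$ via Lemma~\ref{lem:transitive.lie.algebra}, $(3)\Rightarrow(1)$ via Lemma~\ref{lem:controllability.induced} together with compactness of $\mathrm{SO}(n)$, and $(1)\Rightarrow(2)$ via Lemma~\ref{lem:induced.rank.necessary} for $n\geqslant 3$ with the $n=2$ case handled directly on the circle. The only difference is that you make explicit two technical points the paper leaves implicit: that compactness is what upgrades accessibility to controllability on the group in the presence of the drift $B_0$, and that one should pass to the identity component $H_0$ (and check it still acts transitively) before invoking Lemma~\ref{lem:controllability.induced}; both refinements are sound and do not change the overall architecture.
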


\begin{example}

  Let us consider the Bloch system, which is a bilinear system of the form,
  \begin{equation}
    \label{eq:Bloch}
    \dot x(t)=\bigl(\omega_0\Omega_z+u(t)\Omega_y+v(t)\Omega_z\bigr) x(t), \quad x(0)=x_0,
  \end{equation}
  where $x(t)$ denotes the bulk magnetic moment of the spin ensemble on the 2\hyp{}dimensional sphere $\mathbb{S}^2$, $\omega_0$ is the Larmor frequency, $u(t)$ and $v(t)$ are the external radio\hyp{}frequency fields (controls) applied on the $y$ and $z$ directions, respectively, and
  \[
    \Omega_x=\begin{pNiceMatrix}
      0 & 0 & 0 \\ 0 & 0 & -1 \\ 0 & 1 & 0
    \end{pNiceMatrix}, \ %
    \Omega_y=\begin{pNiceMatrix}
      0 & 0 & 1 \\ 0 & 0 & 0 \\ -1 & 0 & 0
    \end{pNiceMatrix}, \ %
    \Omega_z=\begin{pNiceMatrix}
      0 & -1 & 0 \\ 1 & 0 & 0 \\ 0 & 0 & 0
    \end{pNiceMatrix}
  \]
  in $\mathfrak{so}(3)$ are the generators of rotation around the $x$-, $y$-, and $z$-axis, respectively. From our earlier discussion, \eqref{eq:Bloch} is induced by the Bloch system defined on $\mathrm{SO}(3)$,
  \[
    \dot{X}(t)=\bigl(\omega_0\Omega_z+u(t)\Omega_y+v(t)\Omega_z\bigr) X(t) ,\quad X(0)=I.
  \]

  More generally, we can consider the following bilinear system
  \begin{equation}
    \label{eq:induced.system.standard.basis}
    \dot{x}(t)=\Bigl(\Omega_{i_0j_0}+\sum_{k=1}^{m}u_k(t)\Omega_{i_kj_k}\Bigr)x(t),
    \quad{} x(0)\in\mathbb{S}^{n-1},
  \end{equation}
  where $\Omega_{ij}=E_{ij}-E_{ji}$ is an anti-symmetric matrix belonging to the standard basis of $\mathrm{SO}(n)$. It is induced by
  \begin{equation}
    \label{eq:bilinear.system.standard.basis}
    \dot{X}(t)=\Bigl(\Omega_{i_0j_0}+\sum_{k=1}^{m}u_k(t)\Omega_{i_kj_k}\Bigr)X(t),
    \quad{} X(0)\in\mathrm{SO}(n),
  \end{equation}
  on $\mathrm{SO}(n)$. It is proved in \cite{Cheng2021,Zhang19} that the bilinear systems in \eqref{eq:bilinear.system.standard.basis} is controllable on $\mathrm{SO}(n)$ if and only if the associated graph $\mathcal{G}=(\mathcal{V},\mathcal{E})$ is connected, where $\mathcal{V}=\{v_1,\ldots,v_n\}$ and $\mathcal{E}=\{v_{i_0}v_{j_0}, \ldots, v_{i_m}v_{j_m}\}$. Next, we will show that the controllability of \eqref{eq:induced.system.standard.basis} and \eqref{eq:bilinear.system.standard.basis} coincide, so graph connectivity criterion also applies to \eqref{eq:induced.system.standard.basis}.

  \begin{proposition}
    \label{prop:induced.controllability.connectivity}
    An induced system in \eqref{eq:induced.system.standard.basis} is controllable on $\mathbb{S}^{n-1}$ if and only if $\mathcal{G}$ is a connected graph.
  \end{proposition}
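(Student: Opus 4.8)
The plan is to route everything through Theorem~\ref{thm:equivalent.controllability.induced}, which already reduces controllability of the induced system \eqref{eq:induced.system.standard.basis} to a statement about the subalgebra $\mathfrak{h}=\lie\{\Omega_{i_0j_0},\dots,\Omega_{i_mj_m}\}$ and the subgroup $H\leqslant\mathrm{SO}(n)$ it generates. The only genuinely new ingredient needed is to translate graph connectivity into the algebraic structure of $\mathfrak{h}$, and for that I would use the bracket identity $[\Omega_{ij},\Omega_{jk}]=\Omega_{ik}$ for distinct $i,j,k$. The argument then splits into the two implications.

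For the ``if'' direction, I would assume $\mathcal{G}$ is connected. By the cited result of \cite{Cheng2021,Zhang19}, the lifted system \eqref{eq:bilinear.system.standard.basis} is then controllable on $\mathrm{SO}(n)$. Since $\mathbb{S}^{n-1}$ is an $\mathrm{SO}(n)$-homogeneous space, Lemma~\ref{lem:controllability.induced} immediately delivers controllability of \eqref{eq:induced.system.standard.basis} on $\mathbb{S}^{n-1}$, which is exactly the sense in which the two controllabilities ``coincide.'' Equivalently, connectivity together with the bracket identity forces $\mathfrak{h}=\mathfrak{so}(n)$, hence $H=\mathrm{SO}(n)$, which acts transitively on the sphere, and Theorem~\ref{thm:equivalent.controllability.induced} again concludes.

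For the ``only if'' direction I would argue by contraposition. Assuming $\mathcal{G}$ is disconnected, partition the vertex set into connected components $\mathcal{V}=V_1\sqcup\dots\sqcup V_r$ with $r\geqslant 2$. Every generator $\Omega_{i_kj_k}$ has both indices inside a single component, and the bracket identity $[\Omega_{ij},\Omega_{jk}]=\Omega_{ik}$ can only combine edges sharing a vertex, so it never links indices lying in different components. Hence the generated subalgebra is block diagonal, $\mathfrak{h}\subseteq\bigoplus_{c=1}^{r}\mathfrak{so}(V_c)\subsetneq\mathfrak{so}(n)$, and consequently $H$ preserves each coordinate subspace $\mathbb{R}^{V_c}\subseteq\mathbb{R}^n$. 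For any $x\in\mathbb{S}^{n-1}$ the set $\{Ax:A\in\mathfrak{h}\}$ therefore splits blockwise, giving $\rank_x\mathfrak{h}\leqslant\sum_{c}(|V_c|-1)=n-r\leqslant n-2<n-1$ at every point. By Theorem~\ref{thm:equivalent.controllability.induced} the induced system is not controllable, completing the contrapositive.

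The main obstacle, and the reason the converse cannot simply be read off from the group-level graph criterion, is that transitivity on the sphere is strictly weaker than $H$ equalling $\mathrm{SO}(n)$: by the Montgomery--Samelson classification there exist proper closed subgroups (such as $\mathrm{SU}(n/2)$, $\mathrm{Sp}(n/4)$, and $G_2$) acting transitively on spheres, so an arbitrary ``induced controllable'' system need not lift to a controllable system on $\mathrm{SO}(n)$. What rescues the argument here is that the subalgebras in question are generated by standard-basis elements $\Omega_{ij}$, whose bracket structure is governed entirely by the incidence graph. The one step that needs genuine care is verifying that no iterated bracket of such generators can escape a connected component; once this rigidity is established, the block-diagonal decomposition and the resulting rank bound are routine.
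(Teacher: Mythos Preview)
Your argument is correct and matches the paper's in structure: sufficiency via the cited $\mathrm{SO}(n)$ graph criterion together with Lemma~\ref{lem:controllability.induced}, and necessity via the block structure of $\mathfrak{h}$ forced by a disconnected $\mathcal{G}$. The only difference is which equivalent condition of Theorem~\ref{thm:equivalent.controllability.induced} you invoke for the converse. The paper shows non-transitivity directly by exhibiting a point fixed by $H$: after relabeling so that no $\Omega_{jn}$ lies in $\mathfrak{h}$, every element of $H$ has the form $\begin{psmallmatrix}A&\\&1\end{psmallmatrix}$ and fixes $e_n$. You instead establish the uniform rank bound $\rank_x\mathfrak{h}\leqslant n-r<n-1$ from the containment $\mathfrak{h}\subseteq\bigoplus_{c}\mathfrak{so}(V_c)$. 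Your route treats all component decompositions at once without a relabeling reduction; the paper's fixed-point argument is more concrete but, as written, only literally covers the case of an isolated vertex.
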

  
  \begin{proof}
    The part of sufficiency is clear, since the connectivity of $\mathcal{G}$ ensures controllability of the bilinear system on $\mathrm{SO}(n)$, so by Lemma~\ref{lem:controllability.induced}, the induced system in \eqref{eq:induced.system.standard.basis} is also controllable.  On the other hand, without loss of generality, we may assume that the vertex $v_n$ is \emph{not} contained in $\mathcal{G}$, which means $\Omega_{jn}\not\in\mathfrak{g}$ for any $j=1,\ldots,n-1$, where $\mathfrak{g}\leqslant{}\mathfrak{so}(n)$ is the Lie subalgebra generated by $\{\Omega_{i_0j_0}, \Omega_{i_1j_1}, \ldots, \Omega_{i_mj_m}\}$. Therefore, each group element in $G=\langle\exp\mathfrak{g}\rangle$ has the form $\begin{pNiceMatrix}[small] A & \\ & 1 \end{pNiceMatrix}$ with $A$ being a matrix in $\mathrm{SO}(n-1)$, which implies that the $G$-action fixes the point $x_n=(0,\ldots,1)^{\tr}\in\mathbb{S}^{n-1}$, and thus is \emph{not} transitive. So by the necessity condition in Theorem~\ref{thm:equivalent.controllability.induced}, system in \eqref{eq:induced.system.standard.basis} is not controllable if the graph $\mathcal{G}$ is not connected.
  \end{proof}
\end{example}

So far we have proved a rank condition which is both necessary and sufficient for the controllability of systems induced by $\mathrm{SO}(n)$-actions, so that the Lie algebra $\mathfrak{L}$ of drifting and control vector fields of a \emph{controllable} system in \eqref{eq:induced.system.son} is of \emph{constant full rank}. Next, we turn our focus to induced systems that are not controllable, and describe their controllable submanifolds. 

\begin{example}
  \label{ex:non-constant.rank.vector.field}
  Let us revisit the bilinear system we introduced at the beginning of Section~\ref{sec:pre} evolving on $\mathbb{S}^3$ which is induced by the $\mathrm{SO}(4)$-action with $m=2$, $\norm{x_0}=1$, where
  \[
    B_0=0,\ %
    B_1=
    \begin{pNiceMatrix}
      0 & & & \\
      & 0 & 1 & \\
      & -1 & 0 & \\
      & & & 0 \\
    \end{pNiceMatrix},\ %
    B_2=
    \begin{pNiceMatrix}
      0 & & & \\
      & 0 & & \\
      & & 0 & 1 \\
      & & -1 & 0 \\
    \end{pNiceMatrix}.
  \]
  Using Proposition~\ref{prop:induced.controllability.connectivity}, we find that the system is \emph{not} controllable, because the associated graph $\mathcal{G}$ is not connected.
  
  We have demonstrated that the Lie algebra $\mathfrak{L}\subset\mathfrak{X}(\mathbb{S}^3)$ generated by vector fields $\{B_1x, B_2x\}$ is not of constant rank: $\rank_{x_1}\mathfrak{g}=0$ for $x_1=(1,0,0,0)^{\tr}$, and $\rank_{x_2}\mathfrak{g}=\rank_{x_3}=2$ for $x_2=(0,1,0,0)^{\tr}$ and $x_3=(0,0,1,0)^{\tr}$. To see how the rank of the algebra $\mathfrak{L}$ changes, from the perspective of group action, we notice that $x_1=(1,0,0,0)^{\tr}$ and $x_2=(0,1,0,0)^{\tr}$ belong to different orbits. To be more specific, the controllable submanifold of the corresponding bilinear system
  \[
    \dot{X}(t)=\bigl(u_1(t)B_1+u_2(t)B_2\bigr)X, \quad X(0)=I
  \]
  on $\mathrm{SO}(4)$ is the closed Lie subgroup $H<\mathrm{SO}(4)$, which is generated by the Lie subalgebra $\mathfrak{h}:=\lie\{B_0,B_1,B_2\}$. $H$ consists of all matrices of the form \(\begin{psmallmatrix} 1 & \\ & A \end{psmallmatrix}\), where $A\in\mathrm{SO}(3)$ is a $3$-by-$3$ matrix. Therefore, $x_2=(0,1,0,0)^\tr$ and $x_3=(0,0,1,0)^\tr$ belong to the same orbit of $H$, so that the rank of $\mathfrak{L}$ on both points coincide (see Lemma~\ref{lem:constant.rank.on.orbits} below); while $x_1=(1,0,0,0)^{\tr}$ is in a different orbit. 
\end{example}

The above observation motivates us to examine how rank of a Lie subalgebra varies on a manifold, which lays the foundation for the study of controllable submanifolds. In fact, the following lemma asserts that the rank of $\mathfrak{L}$ is \emph{constant} over an orbit of $H$.

\begin{lemma}
  \label{lem:constant.rank.on.orbits}
  For a subalgbra $\mathfrak{h}:=\lie\{B_i\}\leqslant\mathfrak{so}(n)$ and let $H\leqslant{}\mathrm{SO}(n)$ be a closed Lie subgroup corresponding to $\mathfrak{h}$. Given $x\in\mathbb{S}^{n-1}$, the rank of $\mathfrak{h}$ is constant over the orbit $H(x)$, i.e., $\rank_{y}\mathfrak{h}=\dim{}H(x)$ for all $y\in{}H(x)$. Consequently, the rank of $\mathfrak{L}=\lie\{B_ix\}$ is \emph{constant} on $H (x)$.
\end{lemma}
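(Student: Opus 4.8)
The plan is to identify the subspace $\{Ax : A \in \mathfrak{h}\}$, whose dimension is by Definition~\ref{def:rank.of.lie.algebra} exactly $\rank_{x}\mathfrak{h}$, with the tangent space $\mathsf{T}_x H(x)$ of the orbit at $x$, thereby establishing the pointwise identity $\rank_{x}\mathfrak{h} = \dim H(x)$. Once this is in hand, constancy over the orbit is immediate: every $y \in H(x)$ lies on the \emph{same} orbit, so $H(y) = H(x)$, and applying the identity at $y$ yields $\rank_{y}\mathfrak{h} = \dim H(y) = \dim H(x)$.

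First I would set up the orbit map $\theta^{x}\colon H \to \mathbb{S}^{n-1}$, $g \mapsto g\cdot x = gx$, and compute its differential at the identity. By Definition~\ref{def:fundamental.vector.field}, for every $A \in \mathfrak{h} = \mathsf{T}_e H$ we have $(\mathrm{d}\theta^{x})_e(A) = \frac{\mathrm{d}}{\mathrm{d}t}\big|_{t=0}\exp(tA)x = Ax$, so the image of $(\mathrm{d}\theta^{x})_e$ is precisely $\{Ax : A \in \mathfrak{h}\}$. It then remains to show that this image is exactly $\mathsf{T}_x H(x)$.

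This is the crux of the argument, and it is where properness enters. Since $H$ is a closed subgroup of the compact group $\mathrm{SO}(n)$, it is itself compact and acts properly on $\mathbb{S}^{n-1}$; hence by Lemma~\ref{lem:orbits.of.a.proper.action} the orbit $H(x)$ is an embedded submanifold and the orbit map descends to a diffeomorphism $H/H_x \xrightarrow{\sim} H(x)$, where $H_x$ denotes the (closed) stabilizer of $x$. Consequently $\theta^{x}$ is a submersion from $H$ onto $H(x)$, so $(\mathrm{d}\theta^{x})_e\colon \mathfrak{h} \to \mathsf{T}_x H(x)$ is surjective with kernel $\mathsf{T}_e H_x = \mathfrak{h}_x$. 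Therefore $\mathsf{T}_x H(x) = \{Ax : A \in \mathfrak{h}\}$ and, counting dimensions, $\rank_{x}\mathfrak{h} = \dim\{Ax : A \in \mathfrak{h}\} = \dim H - \dim H_x = \dim H(x)$.

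Finally, combining the first and third paragraphs gives $\rank_{y}\mathfrak{h} = \dim H(x)$ for all $y \in H(x)$, which is the constancy claim. The statement for $\mathfrak{L}$ then follows at once from the remark after Definition~\ref{def:rank.of.lie.algebra} that $\rank_{y}\mathfrak{L} = \rank_{y}\mathfrak{h}$ at every point, so $\mathfrak{L}$ has constant rank $\dim H(x)$ on $H(x)$. I expect no serious obstacle beyond the tangent-space identification: the only step requiring care is justifying that $\theta^{x}$ is a submersion onto the \emph{embedded} orbit $H(x)$, which is exactly the content supplied by properness through Lemma~\ref{lem:orbits.of.a.proper.action}; without properness the orbit could fail to be embedded and the clean dimension count would break down.
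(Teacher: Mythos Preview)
Your proposal is correct and follows essentially the same route as the paper. The paper's proof simply invokes Lemma~\ref{lem:orbits.of.a.proper.action} directly for the tangent-space identity $\mathsf{T}_{y}H(x)=\{Ay:A\in\mathfrak{h}\}$ and reads off $\rank_{y}\mathfrak{h}=\dim H(x)$ from it, whereas you unpack that identity by computing the differential of the orbit map and using the quotient diffeomorphism $H/H_x\cong H(x)$; but this is precisely the content already packaged into Lemma~\ref{lem:orbits.of.a.proper.action}, so the two arguments coincide in substance.
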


\begin{proof}
  We know from Lemma~\ref{lem:orbits.of.a.proper.action} that the orbit $H(x)$ is a closed, embedded submanifold of $\mathbb{S}^{n-1}$, and that for any $y\in{}H(x)$,
  \begin{equation}
    \mathsf{T}_{y}H(x)=\{Ay:A\in\mathfrak{h}\}.
  \end{equation}
  Hence we have $\rank_{y}\mathfrak{h}=\dim{}H(x)$ for all $y\in{}H(x)$. As a consequence, since the rank of $\mathfrak{L}$ at $y\in{}H(x)$ equals to $\rank_{y}\mathfrak{h}$, $\mathfrak{L}$ is of constant rank over any given orbit $H(x)$.
\end{proof}

From Lemma~\ref{lem:constant.rank.on.orbits}, we can draw a clear link between controllable submanifolds and group orbits. More specifically, in the next theorem we prove that the controllable submanifold of system~\eqref{eq:induced.system.son} with a given initial condition $x(0)=x_0$ is a group orbit of $x_0$.

\begin{theorem}
  \label{thm:induced.system.controllable.submanifold}
  Let $\mathfrak{h}$ and $H$ be the same as in Lemma~\ref{lem:constant.rank.on.orbits}. For an induced bilinear system in \eqref{eq:induced.system.son} with initial condition $x(0)=x_0\in\mathbb{S}^{n-1}$, its controllable submanifold is the orbit $H(x_0)$.
\end{theorem}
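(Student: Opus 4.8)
The plan is to prove the set equality by establishing two inclusions for the reachable set $\mathcal{R}(x_0)$ of the induced system \eqref{eq:induced.system.son} issuing from $x_0$: first $\mathcal{R}(x_0)\subseteq H(x_0)$, by showing that the orbit is invariant under the dynamics, and then $\mathcal{R}(x_0)\supseteq H(x_0)$, by showing that the system, once restricted to the orbit, is controllable there.

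For the first inclusion I would invoke Lemma~\ref{lem:constant.rank.on.orbits}, which identifies the tangent space of the orbit as $\mathsf{T}_yH(x_0)=\{Ay:A\in\mathfrak{h}\}$ for every $y\in H(x_0)$. Since each $B_i\in\mathfrak{h}$, the drift and control vector fields $B_iy$ all lie in $\mathsf{T}_yH(x_0)$; hence the total vector field $(B_0+\sum_iu_iB_i)y$ of \eqref{eq:induced.system.son} is tangent to $H(x_0)$ at every point of the orbit and for every admissible control. Because $H(x_0)$ is a closed, embedded submanifold of $\mathbb{S}^{n-1}$ by Lemma~\ref{lem:orbits.of.a.proper.action}, the integral curves issuing from $x_0\in H(x_0)$ cannot leave it, which gives $\mathcal{R}(x_0)\subseteq H(x_0)$.

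For the reverse inclusion I would regard $H(x_0)$ as a homogeneous $H$-space on which the restricted system is precisely the system induced by the right\hyp{}invariant bilinear system $\dot X=(B_0+\sum_iu_iB_i)X$ on $H$: the restriction is well defined by the tangency just established, and the restricted fields are exactly the fundamental vector fields of the $H$-action on $H(x_0)$. Since $H$ is a closed subgroup of the compact group $\mathrm{SO}(n)$ it is itself compact, and (being generated by $\exp\mathfrak{h}$) connected, while the control directions generate its full Lie algebra $\mathfrak{h}=\lie H$. By the classical controllability result for right\hyp{}invariant systems on compact Lie groups, the reachable semigroup of this system equals all of $H$. This reachable set therefore acts transitively on the orbit $H(x_0)$, so applying Lemma~\ref{lem:controllability.induced} with the group taken to be $H$ and the homogeneous space $M=H(x_0)$ shows that the induced system is controllable on $H(x_0)$, i.e.\ $\mathcal{R}(x_0)\supseteq H(x_0)$. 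Combining the two inclusions yields $\mathcal{R}(x_0)=H(x_0)$.

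The main obstacle is the reverse inclusion, and specifically the claim that the reachable semigroup on $H$ fills out all of $H$: with a nonzero drift $B_0$ this can fail on non\hyp{}compact groups, and it is exactly here that compactness of $H\leqslant\mathrm{SO}(n)$ is indispensable, since it is compactness (through Poisson stability of the drift flow $\exp(tB_0)$) that upgrades accessibility to full controllability. A secondary point requiring care is verifying that restricting \eqref{eq:induced.system.son} to the embedded orbit genuinely reproduces the system induced by the $H$-action, so that Lemma~\ref{lem:controllability.induced} applies verbatim with reachable set $\mathcal{A}=H$.
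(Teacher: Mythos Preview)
Your proposal is correct and follows essentially the same two-inclusion strategy as the paper: tangency of the vector fields to the orbit (via Lemma~\ref{lem:orbits.of.a.proper.action}/Lemma~\ref{lem:constant.rank.on.orbits}) gives $\mathcal{R}(x_0)\subseteq H(x_0)$, and transitivity of $H$ on $H(x_0)$ combined with Lemma~\ref{lem:controllability.induced} gives the reverse inclusion. The paper's own proof is terser---it invokes Lemma~\ref{lem:controllability.induced} directly after noting that $H$ acts transitively on $H(x_0)$, leaving implicit the fact that the bilinear system on $H$ is itself controllable (hence $\mathcal{A}=H$); your discussion of compactness and the drift term makes this step explicit, which is a genuine clarification rather than a different argument.
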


\begin{proof}
  By \eqref{eq:fundamental.vector.field(matrix)} and Lemma~\ref{lem:orbits.of.a.proper.action}, the system in \eqref{eq:induced.system.son} evolves on the orbit $H(x_0)$, so its attainable set is a subset of $H(x_0)$. On the other hand, since $H$ acts transitively on the orbit $H(x_0)$ by definition, Lemma~\ref{lem:controllability.induced} guarantees that the system in \eqref{eq:induced.system.son} is controllable on $H(x_0)$. Therefore, we conclude that the controllable submanifold of the system in \eqref{eq:fundamental.vector.field(matrix)} is the orbit $H(x_0)$.
\end{proof}

\begin{remark}
  As a consequence of Theorem~\ref{thm:induced.system.controllable.submanifold}, we remark that the controllable submanifolds of induced systems in the form of~\eqref{eq:induced.system.son} can be classified by closed subgroups of $\mathfrak{so}(n)$ and their orbits. To be more specific, suppose $G\leqslant\mathrm{SO}(n)$ is a closed subgroup, $\mathfrak{g}=\mathsf{T}_{e}G$ is the Lie subalgebra of $\mathfrak{so}(n)$ corresponding to $G$ and let $\{A_1,\ldots,A_m\}$ be a (finite) set of generators of $\mathfrak{g}$, then the orbit $G(y_0)$ is the controllable submanifold of the following system
  \[
    \dot{x}(t)=\Bigl(\sum_{i=1}^{m} u_i(t)A_i\Bigr)x(t),\quad{}x(0)=y_0.
  \]
\end{remark}

In summary, in this section, we examine the bilinear systems induced by $\mathrm{SO}(n)$ group actions, and show that the rank of its underlying Lie algebra $\mathfrak{L}$ is constant over its controllable submanifold. Therefore, such a bilinear system is controllable if the rank of $\mathfrak{L}$ is maximal at any point; and if it is not controllable, the controllable submanifold is an orbit of the group action.

\section{Systems Induced by Proper Lie Group
  Actions} \label{sec:proper.action}

Following the discussions in previous sections on systems induced by the natural $\mathrm{SO}(n)$-action, in this section, we generalize Theorem~\ref{thm:equivalent.controllability.induced} and Theorem~\ref{thm:induced.system.controllable.submanifold}, and prove that the previous results still hold for systems induced by \emph{proper} Lie group actions.

Let $G$ be a (connected) matrix group acting \emph{properly} on $\mathbb{R}^n$, $\mathfrak{g}$ the Lie algebra of $G$, and $\mathcal{H}\subseteq\mathbb{R}^n$ a homogeneous $G$-space. For an induced bilinear system in the form of
\begin{equation}
  \label{eq:induced.system.driftless}
  \dot{x}(t)=\Bigl(\sum_{i=1}^m u_i(t)B_i\Bigr)x(t), \quad{}x(0)=x_0\in\mathcal{H},
\end{equation}
where $B_1,\ldots,B_m\in\mathfrak{g}$, we have the following theorem concerning the controllability of system~\eqref{eq:induced.system.driftless} and its controllable submanifolds.

\begin{theorem}
  \label{thm:induced.proper.action.rank.condition}
  Let $\mathfrak{h}=\lie\{B_1,\ldots,B_m\}$ be a Lie subalgebra of $\mathfrak{g}$ generated by matrices $B_i$ in \eqref{eq:induced.system.driftless}, and let $H\leqslant{}G$ be a closed Lie subgroup of $G$ corresponding to $\mathfrak{h}$. Suppose either (i) $H$ acts transitively on $\mathcal{H}$, or (ii) $\rank_{x}\mathfrak{h}=\dim\mathcal{H}$ for some $x\in\mathcal{H}$, then (iii) the system \eqref{eq:induced.system.driftless} is controllable. Additionally, if the fundamental group $\pi_1(\mathcal{H})$ has no elements of infinite order, then conditions (i), (ii) and (iii) are equivalent.  Furthermore, in case that system~\eqref{eq:induced.system.driftless} is not controllable, its controllable submanifold is the orbit $H(x_0)$.
\end{theorem}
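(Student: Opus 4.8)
The plan is to transcribe the two arguments already carried out for the $\mathrm{SO}(n)$-action—Theorem~\ref{thm:equivalent.controllability.induced} and Theorem~\ref{thm:induced.system.controllable.submanifold}—replacing $\mathbb{S}^{n-1}$ by $\mathcal{H}$ and substituting the compactness of $\mathrm{SO}(n)$ (which forced properness there) by the standing hypothesis that $G$ acts properly. Two preliminary observations make this substitution legitimate. First, since $\mathcal{H}=G(x_0)$ is the continuous image of the connected group $G$ under the orbit map, $\mathcal{H}$ is connected. Second, because $H\leqslant{}G$ is \emph{closed}, the restriction of the proper $G$-action to $H$ is again proper: the orbit map $H\times\mathcal{H}\to\mathcal{H}\times\mathcal{H}$ is the restriction of its $G$-counterpart to the closed subset $H\times\mathcal{H}$, and the restriction of a proper map to a closed subset is proper. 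Consequently Lemma~\ref{lem:orbits.of.a.proper.action} applies verbatim to the $H$-action on $\mathcal{H}$: every orbit $H(x)$ is a closed embedded submanifold of $\mathcal{H}$ with $\mathsf{T}_{y}H(x)=\{Ay:A\in\mathfrak{h}\}$ at each $y\in{}H(x)$.

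I would first establish the equivalence of (i) and (ii). The tangent-space identity above gives $\rank_{x}\mathfrak{h}=\dim{}H(x)$ for every $x\in\mathcal{H}$, exactly as in Lemma~\ref{lem:constant.rank.on.orbits}. If (ii) holds, then $H(x)$ has full dimension $\dim\mathcal{H}$ and is therefore open in $\mathcal{H}$; being also closed by the properness just discussed, and $\mathcal{H}$ being connected, we get $H(x)=\mathcal{H}$, which is (i). The reverse implication (i)$\Rightarrow$(ii) is immediate, since transitivity forces $H(x)=\mathcal{H}$, hence $\rank_{x}\mathfrak{h}=\dim\mathcal{H}$, for every $x$. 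This step is the direct analog of Lemma~\ref{lem:transitive.lie.algebra}, and it is where properness is essential.

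For (i)$\Rightarrow$(iii) I would note that system~\eqref{eq:induced.system.driftless} is induced, via~\eqref{eq:fundamental.vector.field(matrix)}, by the driftless right-invariant system $\dot{X}=\bigl(\sum_{i=1}^{m}u_i(t)B_i\bigr)X$ on $G$, whose reachable set from $I$ is precisely the connected subgroup $H$ generated by $\exp\mathfrak{h}$. Under (i) this reachable set acts transitively on the homogeneous $G$-space $\mathcal{H}$, so Lemma~\ref{lem:controllability.induced} yields controllability, i.e.\ (iii); combined with (ii)$\Rightarrow$(i) this shows that each of (i) and (ii) implies (iii) with no hypothesis on $\pi_1$. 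For the converse (iii)$\Rightarrow$(ii) under the assumption that $\pi_1(\mathcal{H})$ has no element of infinite order, I would invoke Lemma~\ref{lem:induced.rank.necessary}: since $\mathcal{H}$ is analytic, being a homogeneous space, controllability forces the Lie algebra generated by the fundamental vector fields $\theta_{\ast}B_i=B_ix$ to have constant rank $\dim\mathcal{H}$, and by Definition~\ref{def:rank.of.lie.algebra} this rank at any point equals $\rank_{x}\mathfrak{h}$, giving (ii). This closes the cycle and establishes the equivalence of (i), (ii), and (iii).

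Finally, for the controllable submanifold the argument of Theorem~\ref{thm:induced.system.controllable.submanifold} transfers directly. Because each $B_i\in\mathfrak{h}$, the fields $B_ix$ are tangent to $H(x_0)$ at every point of it, so every trajectory of~\eqref{eq:induced.system.driftless} starting at $x_0$ remains in $H(x_0)$ and the reachable set is contained in $H(x_0)$; conversely $H$ acts transitively on $H(x_0)$ by definition, so Lemma~\ref{lem:controllability.induced} shows the system is controllable on $H(x_0)$, whence the controllable submanifold is exactly $H(x_0)$. I expect the only genuinely delicate point to be the properness of the restricted $H$-action, which underwrites both the closedness of orbits used in (ii)$\Rightarrow$(i) and the embedded-submanifold structure used throughout; everything else is a transcription of the $\mathrm{SO}(n)$ arguments with $\mathbb{S}^{n-1}$ replaced by $\mathcal{H}$.
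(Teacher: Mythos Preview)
Your proposal is correct and follows essentially the same route as the paper: (i)$\Leftrightarrow$(ii) via the open--closed orbit argument of Lemma~\ref{lem:transitive.lie.algebra}, (i)$\Rightarrow$(iii) via Lemma~\ref{lem:controllability.induced}, (iii)$\Rightarrow$(ii) via Lemma~\ref{lem:a.consequence.of.controllability}/Lemma~\ref{lem:induced.rank.necessary}, and the controllable submanifold identified as in Theorem~\ref{thm:induced.system.controllable.submanifold}. The paper's proof is terser and simply asserts that the $\mathrm{SO}(n)$ arguments carry over, whereas you spell out the one point the paper leaves implicit---that the $H$-action on $\mathcal{H}$ is proper because $H$ is closed in $G$ and the restriction of a proper map to a closed domain is proper---which is exactly the hinge on which the closedness of $H(x)$, and hence (ii)$\Rightarrow$(i), rests.
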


\begin{proof}
  $\mathrm{(i)}\Rightarrow\mathrm{(iii)}$ comes directly from Lemma~\ref{lem:controllability.induced}. Note that $\mathcal{H}$ is connected since $H$ is connected, one can show $\mathrm{(i)}\Leftrightarrow\mathrm{(ii)}$ by an almost identical argument in the proof of Lemma~\ref{lem:transitive.lie.algebra}. In the case that $\pi_{1}(\mathcal{H})$ has no element of infinite order, one can apply Lemma~\ref{lem:a.consequence.of.controllability} to show that $\mathrm{(iii)}\Rightarrow\mathrm{(ii)}$, since the homogeneous space $\mathcal{H}$ is naturally analytic.

  Moreover, in case that system~\eqref{eq:induced.system.driftless} is non-controllable, Lemma~\ref{lem:orbits.of.a.proper.action} shows that the rank of $\mathfrak{h}$ is constant over the orbit $H(x_0)$, and that vector fields $\{B_ix\}$ are tangent to $H(x_0)$. Therefore, transitivity gives that $H(x_0)$ is the embedded controllable submanifold in $\mathcal{H}$ of system~\eqref{eq:induced.system.driftless}.
\end{proof}

\begin{remark}
  It is clear that Theorem~\ref{thm:induced.proper.action.rank.condition} applies to bilinear systems with drifting terms in the form of
  \[
    \dot{x}(t)=\Bigl(B_0+\sum_{i=1}^mu_i(t)B_i\Bigr)x(t),\quad{}x(0)=x_0\in\mathcal{H}
  \]
  if $G$ is compact, or if $\dot{X}(t)=B_0X(t)$, $X(0)=I$ has a periodic solution on $G$.
\end{remark}

Since all actions by compact groups are proper, Theorem~\ref{thm:induced.proper.action.rank.condition} applies to all systems induced by actions of compact Lie groups. Next, as an application of Theorem~\ref{thm:induced.proper.action.rank.condition} for non-compact groups, we show a rank condition for systems induced by $\mathrm{SE}(n)$-actions: consider a bilinear system in the form of
\begin{equation}
  \label{eq:driftless.induced.sen}
  \dot{x}(t)=\Bigl(\sum_{t=1}^{m}u_i(t)B_i\Bigr)x(t),\quad{}x(0)=x_0\in\mathbb{R}^n,
\end{equation}
where $B_i=(A_i,\mu_i)\in\mathfrak{se}(n)$ for $A_i\in\mathfrak{so}(n)$, $\mu_i\in\mathbb{R}^n$, and $B_ix:=A_ix+\mu_i$. Note that $\mathbb{R}^n$ is a homogeneous $\mathrm{SE}(n)$-space, and that the $\mathrm{SE}(n)$-action on $\mathbb{R}^n$ is proper.

\begin{lemma}
  \label{lem:proper.sen.action}
  The group action of $\mathrm{SE}(n)$ on $\mathbb{R}^n$ is proper.
\end{lemma}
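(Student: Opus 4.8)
The plan is to verify properness directly from its definition via the standard sequential characterization: the action of a Lie group $G$ on a manifold $M$ is proper if and only if, whenever $x_k\to x$ in $M$ and $g_k\in G$ are such that $g_k\cdot x_k$ converges, the sequence $\{g_k\}$ admits a convergent subsequence. Writing a generic element of $\mathrm{SE}(n)$ as a pair $(A,\mu)$ with $A\in\mathrm{SO}(n)$ and $\mu\in\mathbb{R}^n$ acting by $(A,\mu)\cdot x=Ax+\mu$, the key structural fact I would exploit is the semidirect product decomposition $\mathrm{SE}(n)=\mathrm{SO}(n)\ltimes\mathbb{R}^n$: the rotational factor is compact, and the non-compact translational factor is completely pinned down by the action once the rotation is fixed.

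First I would take sequences $x_k\to x$ in $\mathbb{R}^n$ and $(A_k,\mu_k)\in\mathrm{SE}(n)$ such that $y_k:=A_kx_k+\mu_k$ converges to some $y\in\mathbb{R}^n$. Since $\mathrm{SO}(n)$ is compact, the sequence $\{A_k\}$ has a subsequence $A_{k_j}$ converging to some $A\in\mathrm{SO}(n)$. Along this subsequence, continuity of matrix multiplication together with $x_{k_j}\to x$ gives $A_{k_j}x_{k_j}\to Ax$. I would then recover the translational part by writing
\[
  \mu_{k_j}=y_{k_j}-A_{k_j}x_{k_j}\longrightarrow y-Ax,
\]
so that $\mu_{k_j}$ converges in $\mathbb{R}^n$. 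Consequently $(A_{k_j},\mu_{k_j})\to(A,\,y-Ax)$ in $\mathrm{SE}(n)$, which is exactly the desired convergent subsequence, and properness follows.

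I do not expect any genuine obstacle here; the only point requiring care is recognizing that the non-compactness of the translation subgroup $\mathbb{R}^n$ does not threaten properness, precisely because $\mu_k$ is algebraically determined by the two quantities $A_kx_k$ and $A_kx_k+\mu_k$, both of which converge once compactness of $\mathrm{SO}(n)$ is used to control $A_k$. Equivalently, one could phrase the argument through the defining map $\Phi\colon\mathrm{SE}(n)\times\mathbb{R}^n\to\mathbb{R}^n\times\mathbb{R}^n$, $((A,\mu),x)\mapsto(Ax+\mu,\,x)$, and show directly that the preimage $\Phi^{-1}(K)$ of a compact set $K\subseteq\mathbb{R}^n\times\mathbb{R}^n$ is compact: the second coordinate confines $x$ to a compact set, compactness of $\mathrm{SO}(n)$ confines $A$ and hence $Ax$ to compact sets, and the first coordinate then confines $\mu=(Ax+\mu)-Ax$ to a compact set, so $\Phi^{-1}(K)$ is a closed subset of a compact set and therefore compact.
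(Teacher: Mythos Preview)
Your proof is correct and essentially matches the paper's: both exploit compactness of $\mathrm{SO}(n)$ together with the algebraic identity $\mu=(Ax+\mu)-Ax$ to bound the translational part. The only cosmetic difference is that the paper works directly with the preimage-of-compacts definition (your second paragraph), whereas your primary argument phrases the same idea via the equivalent sequential criterion.
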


\begin{proof}
  Consider the map $\alpha:\mathrm{SE}(n)\times{}\mathbb{R}^n\to\mathbb{R}^n\times{}\mathbb{R}^n$ s.t.\  $\alpha((R,\tau),x)=(x,Rx+\tau)$, where $R\in\mathrm{SO}(n)$, $\tau\in\mathbb{R}^n$, and $(R,\tau)\in\mathrm{SE}(n)$. To show that $\alpha$ is proper, without loss of generality, we need to show that the pre-image
  \begin{equation}
    \label{eq:pre-image.properness.proof}
    \{((R,\tau),x)\in\mathrm{SE}(n)\times\mathbb{R}^n: \norm{x}\leqslant{}c, \norm{Rx+\tau}\leqslant{}c\}
  \end{equation}
  is compact, for any constant $c$. Note that $\norm{Rx}=\norm{x}$ as $R\in\mathrm{SO}(n)$, $\norm{Rx+\tau}\leqslant{}c$ implies that $\norm{\tau}\leqslant{}2c$. Since
  \[
    \{(R,\tau)\in\mathrm{SE}(n):R\in\mathrm{SO}(n), \norm{\tau}\leqslant{}2c\}
  \]
  is a compact subset, the pre-image in \eqref{eq:pre-image.properness.proof} is also compact, which finishes our proof.
\end{proof}
Consequently, by Theorem~\ref{thm:induced.proper.action.rank.condition}, we have the following corollary for the system in \eqref{eq:driftless.induced.sen}:

\begin{corollary}
  \label{cor:induced.sen}
  Let $\mathfrak{g}=\lie\{B_1,\ldots,B_m\}$ be the Lie subalgebra of $\mathfrak{se}(n)$ generated by $B_i$ in \eqref{eq:driftless.induced.sen}, and let $G\leqslant{}\mathrm{SE}(n)$ denote a closed Lie subgroup corresponding to $\mathfrak{g}$. The following statements are equivalent:
  \begin{enumerate}[font=\normalfont,label={(\arabic*)}]%
    \item the system of the form \eqref{eq:driftless.induced.sen} is controllable on $\mathbb{R}^n$;
    \item $\rank_{x}\mathfrak{g}=n$ for some $x\in\mathbb{R}^n$;
    \item $G$ acts transitively on $\mathbb{R}^n$.
  \end{enumerate}
\end{corollary}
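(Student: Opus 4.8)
The plan is to derive this corollary as a direct specialization of Theorem~\ref{thm:induced.proper.action.rank.condition} to the pair $\bigl(\mathrm{SE}(n),\mathbb{R}^n\bigr)$, so that almost all of the work consists in matching hypotheses. First I would fix the dictionary between the two statements. In Theorem~\ref{thm:induced.proper.action.rank.condition} the properly-acting ambient group is named $G$ and its distinguished closed subgroup $H$, whereas in the corollary the ambient group is $\mathrm{SE}(n)$ and the subgroup integrating $\mathfrak{g}=\lie\{B_1,\ldots,B_m\}$ is called $G$. I would therefore apply the theorem with its $G$ instantiated as $\mathrm{SE}(n)$, its $H$ instantiated as the corollary's $G$, its $\mathfrak{h}$ instantiated as $\mathfrak{g}$, and its homogeneous space $\mathcal{H}$ instantiated as $\mathbb{R}^n$ itself.

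Next I would verify the standing hypotheses of the theorem in this instance. The group $\mathrm{SE}(n)=\mathrm{SO}(n)\ltimes\mathbb{R}^n$ is a connected matrix group, and its action on $\mathbb{R}^n$ is proper by Lemma~\ref{lem:proper.sen.action}; the space $\mathbb{R}^n$ is a homogeneous $\mathrm{SE}(n)$-space because any two points are related by a translation, which belongs to $\mathrm{SE}(n)$; and $G$ is closed in $\mathrm{SE}(n)$ by assumption, so that Lemma~\ref{lem:orbits.of.a.proper.action} governs its orbits. Since $\mathcal{H}=\mathbb{R}^n$ has dimension $n$, the rank condition (ii) of the theorem, namely $\rank_x\mathfrak{h}=\dim\mathcal{H}$, reads exactly as statement~(2) of the corollary, while its transitivity condition (i) is statement~(3).

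It remains to bring in controllability, which in Theorem~\ref{thm:induced.proper.action.rank.condition} requires the fundamental group of the homogeneous space to have no element of infinite order. Here $\mathbb{R}^n$ is contractible, so $\pi_1(\mathbb{R}^n)=\{0\}$ and the hypothesis holds trivially. The theorem then delivers the full equivalence of its conditions (i), (ii), (iii), which under the dictionary above are precisely statements (3), (2), (1) of the corollary, giving the asserted three-way equivalence. Since the argument is a direct application, I do not anticipate a genuine difficulty; the two points that most warrant care are (a) confirming that properness, and hence the closed-embedded-orbit structure on which the theorem rests, is inherited by the restricted action of the closed subgroup $G$ from the full $\mathrm{SE}(n)$-action, and (b) recording that $\rank_x\mathfrak{g}$ is computed with respect to the \emph{affine} fundamental vector fields $B_ix=A_ix+\mu_i$ of \eqref{eq:driftless.induced.sen}, so that Definition~\ref{def:rank.of.lie.algebra} is applied in this extended sense rather than to a purely linear action.
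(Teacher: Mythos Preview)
Your proposal is correct and matches the paper's own approach: the paper simply states the corollary as an immediate consequence of Theorem~\ref{thm:induced.proper.action.rank.condition} once Lemma~\ref{lem:proper.sen.action} supplies properness of the $\mathrm{SE}(n)$-action on $\mathbb{R}^n$, and you have spelled out exactly this specialization (including the check that $\pi_1(\mathbb{R}^n)=\{0\}$). Your cautionary remarks (a) and (b) are reasonable bookkeeping but do not reflect any additional work beyond what the paper intends.
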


\section{Conclusion}

In this note, we study the controllability of bilinear systems induced by proper Lie group actions. By utilizing Lie group theory, we establish a connection between the underlying Lie algebra of the induced system and the orbits of its group action, and prove that the rank of the Lie algebra is constant on each orbit. Such pattern of the Lie algebra rank enables us to characterize the controllable submanifolds of induced bilinear systems, and also provides a relaxed rank condition that is both sufficient and necessary. This condition also avoids the difficult task of classifying transitive Lie algebras for a given homogeneous space, which is required in most of the existing works.

\appendix

\subsection{Group Actions, Homogeneous Spaces and Orbits}
\label{appd:group_action}

\begin{definition}[Group Actions and Homogeneous Spaces]
  \label{def:lie.group.action}
  A Lie group action of $G$ on $M$ is a smooth map
  \begin{equation}
    \label{eq:lie.group.action}
    \theta:G\times{}M\to{}M,\quad (g,m)\mapsto{}g.m
  \end{equation}
  such that $h.(g.m)=(hg).m$ for any $h,g\in{}G$ and $m\in{}M$. Given an $x\in{}M$, the \emph{orbit} of $x$ is the image of $G\times\{x\}$ under the group action map $\theta$ in \eqref{eq:lie.group.action}:
  \begin{equation}
    \label{eq:orbit}
    G(x):=\theta(G\times{}\{x\}).
  \end{equation}

  Moreover, a $G$-action is called \emph{transitive} if for any pair of $x_1,x_2\in{}M$, there is some $g\in{}G$ such that $g.x_1=x_2$. If $M$ is endowed with a transitive $G$-action, it is called a \emph{homogeneous $G$-space}. 
\end{definition}

\begin{definition}
  \label{def:fundamental.vector.field}
  For a fixed $X\in\mathfrak{g}$ and any given $m\in{}M$, the curve $\exp(tX)$ on $G$ induces a curve $\Phi(t)=\exp(tX).m$ on $M$ by the $G$-action such that $\Phi(0)=m$. Hence the vector field $X\in\mathfrak{g}$ defines a vector field $\theta_{\ast}X$ on $M$ by taking the derivative of $\Phi(t)$ at $t=0$: for each $m\in M$,
  \begin{equation}
    \label{eq:fundamental.vector.field}
    (\theta_{\ast}X)(m)
    :=\frac{\mathrm{d}}{\mathrm{d}t}\Big|_{t=0}\exp(tX).m.
  \end{equation}
  We call $\theta_{\ast}X$ the \emph{fundamental vector field} of $X$.
\end{definition}

For the establishment of our rank condition for bilinear systems in the form of \eqref{eq:induced.system.son}, there is one type of group actions that is of particular interest to us:

\begin{definition}[Proper Actions]
  \label{def:proper.action}
  A $G$-action on $M$ is \emph{proper} if the following group action map
  \begin{equation}
    \label{eq:group.action.map}
    \begin{array}{cccc}
      \phi: & G\times{}M & \to & M\times{}M \\
            & (g,m) & \mapsto & (m,g.{}m)
    \end{array}
  \end{equation}
  is proper, i.e., pre-images of compact subsets in $M\times{}M$ are compact in $G\times{}M$. Particularly, if $G$ is compact, all $G$-actions are proper.
\end{definition}

One important feature of proper actions is that their orbits are \emph{embedded} submanifolds, as stated in the next lemma.

\begin{lemma}[See \cite{ortega2013momentum}]
  \label{lem:orbits.of.a.proper.action} 
  An orbit $\mathcal{O}$ for a proper $G$-action is a closed, embedded submanifold of $M$, whose tangent space has the form
  \begin{equation}
    \mathsf{T}_m\mathcal{O}=\{\theta_{\ast}X(m):X\in\mathfrak{g}\},\quad{}
    \text{for any } m\in{}\mathcal{O}.
  \end{equation}
\end{lemma}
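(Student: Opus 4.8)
The plan is to realize the orbit as the image of the \emph{orbit map} and to invoke properness precisely at the step where an injective immersion is upgraded to a closed embedding. Fix a base point $x\in M$ and write $\mathcal{O}=G(x)$. The stabilizer $G_x=\{g\in G:g.x=x\}$ is the preimage of the point $x$ under the continuous map $g\mapsto g.x$, hence a closed subgroup; by the closed subgroup theorem it is an embedded Lie subgroup, the quotient $G/G_x$ carries a unique smooth manifold structure making the projection $\pi:G\to G/G_x$ a smooth submersion, and the orbit map $\theta_x:G\to M$, $g\mapsto g.x$, being constant on left cosets of $G_x$, descends to a smooth injective map $\bar\theta_x:G/G_x\to M$ with image exactly $\mathcal{O}$. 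This $\bar\theta_x$ is an immersion because $\ker d(\theta_x)_e$ is the Lie algebra of $G_x$, which is precisely $\ker d\pi_e$, and the immersion property propagates to every coset by left translation.

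Second, I would show that properness forces $\bar\theta_x$ to be a \emph{topologically proper} map, which is the crux of the argument. Given a compact set $K\subseteq M$, observe that for the proper map $\phi$ of \eqref{eq:group.action.map} one has $\phi^{-1}(\{x\}\times K)=\{(g,m):m=x,\ g.x\in K\}=\theta_x^{-1}(K)\times\{x\}$. Since $\{x\}\times K$ is compact and $\phi$ is proper, $\theta_x^{-1}(K)$ is compact in $G$, and therefore $\bar\theta_x^{-1}(K)=\pi(\theta_x^{-1}(K))$ is compact in $G/G_x$; thus $\bar\theta_x$ is proper. A proper continuous injection from a manifold into the locally compact Hausdorff space $M$ is a closed map and a homeomorphism onto its image. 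Consequently $\mathcal{O}$ is closed in $M$, and $\bar\theta_x$ is simultaneously an injective immersion and a topological embedding, so $\mathcal{O}$ is an embedded submanifold.

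Third, I would compute the tangent space. Differentiating the orbit map at the identity gives $d(\theta_x)_e(X)=\frac{\mathrm d}{\mathrm dt}\big|_{t=0}\exp(tX).x=(\theta_{\ast}X)(x)$ for every $X\in\mathfrak{g}$, so the image of $d(\theta_x)_e$ is exactly $\{(\theta_{\ast}X)(x):X\in\mathfrak{g}\}$, whose dimension equals $\dim\mathfrak{g}-\dim G_x=\dim(G/G_x)=\dim\mathcal{O}$. Each curve $t\mapsto\exp(tX).x$ lies in $\mathcal{O}$, so every such $(\theta_{\ast}X)(x)$ is tangent to the embedded submanifold $\mathcal{O}$; matching dimensions then yields $\mathsf{T}_x\mathcal{O}=\{(\theta_{\ast}X)(x):X\in\mathfrak{g}\}$. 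Because the base point $x$ was an arbitrary element of its own orbit, repeating the argument with any $m\in\mathcal{O}$ in place of $x$ (using $G(m)=\mathcal{O}$) gives $\mathsf{T}_m\mathcal{O}=\{(\theta_{\ast}X)(m):X\in\mathfrak{g}\}$ for every $m\in\mathcal{O}$.

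The main obstacle is the middle step: a general injective immersion (for instance a dense line on a torus) need not be an embedding, and its image need not be closed, so both conclusions genuinely fail without properness. The entire content of the proof is that properness of $\phi$ transfers to topological properness of the orbit map $\bar\theta_x$, which is exactly the hypothesis needed to promote the injective immersion to a closed embedding; the tangent-space formula is then a routine differential of the action.
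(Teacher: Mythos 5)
Your proposal is correct, and it takes a genuinely more self-contained route than the paper. The paper's own proof outsources the hard geometric content---that orbits of proper actions are closed, embedded submanifolds---to \cite{ortega2013momentum} (Corollary~2.3.33), and only proves the tangent-space formula: it applies the equivariant rank theorem to the evaluation map $\ev:G\to\mathcal{O}$, $g\mapsto g.m$, to get constant rank, then uses surjectivity (via the global rank theorem) to conclude this map is a submersion onto $\mathcal{O}$, so that the image of its differential, namely $\{\theta_{\ast}X(m):X\in\mathfrak{g}\}$, is all of $\mathsf{T}_m\mathcal{O}$. You instead prove the cited part as well: factoring the orbit map through $G/G_x$ to get an injective immersion, and then using the key identity $\phi^{-1}(\{x\}\times K)=\theta_x^{-1}(K)\times\{x\}$ to transfer properness of the action map to topological properness of the orbit map, which upgrades the injective immersion to a closed embedding. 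That identity is exactly where properness enters, and your emphasis on it (with the dense-line-on-a-torus counterexample) correctly isolates what would fail otherwise. Your tangent-space computation by dimension count ($\dim\mathrm{im}\,d(\theta_x)_e=\dim\mathfrak{g}-\dim G_x=\dim\mathcal{O}$, combined with the containment coming from curves $t\mapsto\exp(tX).x$ lying in $\mathcal{O}$) is a valid alternative to the paper's submersion argument. One point worth flagging: your step-one claims that $\ker d(\theta_x)_e=\mathrm{Lie}(G_x)$ and that the immersion property propagates to all cosets by translation are standard, but their usual proofs invoke precisely the equivariant rank theorem that the paper uses; to make your argument fully rigorous you should either cite that theorem at this step or prove these two claims explicitly, since otherwise the immersion property of $\bar\theta_x$ is asserted rather than established.
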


\begin{proof}
  The orbit $\mathcal{O}$ being a closed, embedded submanifold is proved in \cite[Corollary~2.3.33]{ortega2013momentum}, and its tangent space $\mathsf{T}_{m}\mathcal{O}$ can be computed using the constant rank theorem. More specifically, consider an evaluation map $\ev:G\to{}\mathcal{O}$ such that
  \[
    \ev(g):=g.{}m.
  \]
  
  Note that $\ev$ is $G$-equivariant, i.e., $\ev(hg)=h.{}\ev(g)$ for all $h\in{}G$, so it is of constant rank, say $r$, by the equivariant rank theorem \cite{LEE2013INTRO}. Therefore, since $\ev$ is surjective, at the identity there is a basis $\{X_1,\ldots,X_l\}$ of $\mathfrak{g}=\mathsf{T}_{e}G$ such that $\{\mathrm{d}(\ev)(X_i)\}_{i=1}^r$ forms a basis of $\mathsf{T}_m\mathcal{O}$ and that $\mathrm{d}(\ev)(X_j)=0$ for $j=r+1,\ldots,l$. Since
  \begin{align*}
    \mathrm{d}(\ev)(X) &= \frac{\mathrm{d}}{\mathrm{d}t}\Big|_{t=0}\ev(\exp{}tX) \\
      &= \frac{\mathrm{d}}{\mathrm{d}t}\Big|_{t=0}(\exp{}tX).{}m=\theta_{\ast}(X)(m)
  \end{align*}
  for any $X\in\mathfrak{g}$, we conclude that
  \begin{align*}
    \mathsf{T}_{m}\mathcal{O} &=\mathrm{span}\,\{\mathrm{d}(\ev)(X_i)\}_{i=1}^r \\
      &=\mathrm{span}\,\{\mathrm{d}(\ev)(X_i)\}_{i=1}^l \\
      &=\mathrm{span}\,\{\theta_{\ast}(X_i)(m)\} \\
      &=\{\theta_{\ast}(X)(m):X\in\mathfrak{g}\}. \qedhere
  \end{align*}
\end{proof}

\subsection{Lie Algebra Rank Condition}
\label{appd:LARC}

Consider a right-invariant bilinear system defined on a compact, connected Lie group $G$ in the form of
\begin{equation}
  \label{eq:bilinear.system_appendix}
  \dot{X}(t)=B_{0}X(t)+\Bigl(\sum_{i=1}^mu_i(t)B_i\Bigr)X(t),\quad{}X(0)=I,
\end{equation}
where $X(t)\in G$ is the state, $I$ is the identity element of $G$, $B_0$, \dots, $B_m$ are elements in the Lie algebra $\mathfrak{g}$ of $G$, and $u_i(t)\in\mathbb{R}$ are piecewise constant control inputs. Let $\Gamma=\{B_0,\dots,B_m\}$ be the set of the drift and control vector fields of the system in \eqref{eq:bilinear.system_appendix} evaluated at the identity element $I$, then we use $\lie(\Gamma)$ to denote the Lie subalgebra of $\mathfrak{g}$ generated by $\Gamma$, which is the smallest vector subspace of $\mathfrak{g}$ that contains $\Gamma$ and is closed under the Lie bracket operation, i.e., $[C,D]=CD-DC$ for $C,D\in\mathfrak{g}$. The LARC then estabilishes a connection between the controllability of the systems in the form of \eqref{eq:bilinear.system_appendix} and the Lie algebras generated by the vector fields governing the system dynamics.
\begin{theorem}[LARC, see \cite{Brockett72}]
  \label{thm:LARC}
  The system in \eqref{eq:bilinear.system_appendix} is \emph{controllable} on $G$ if and only if $\lie(\Gamma)=\mathfrak{g}$.
\end{theorem}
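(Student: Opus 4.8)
The plan is to prove the two implications separately, reading ``controllable on $G$'' as the statement that the reachable set $\mathcal{A}$ from the identity $I$ equals $G$. Note first that right-invariance of \eqref{eq:bilinear.system_appendix} makes this the correct notion: if $X(\cdot)$ is a trajectory with $X(0)=I$, then $X(\cdot)g$ is a trajectory with $X(0)=g$, so the reachable set from $g$ is $\mathcal{A}g$, and $\mathcal{A}=G$ is equivalent to reaching every target from every initial point.

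\emph{Necessity.} Suppose the system is controllable. Since the controls are piecewise constant, every reachable state is a finite product $\exp(\tau_N C_N)\cdots\exp(\tau_1 C_1)$ in which each $C_j$ lies in $\mathrm{span}\{B_0,\dots,B_m\}\subseteq\lie(\Gamma)$. Let $H\leqslant G$ be the connected immersed Lie subgroup with $\mathrm{Lie}(H)=\lie(\Gamma)$, guaranteed by the subalgebra--subgroup correspondence. Each factor $\exp(\tau_j C_j)$ lies in $H$, so $\mathcal{A}\subseteq H$. Controllability gives $\mathcal{A}=G$, forcing $H=G$ and hence $\lie(\Gamma)=\mathrm{Lie}(G)=\mathfrak{g}$.

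\emph{Sufficiency.} Conversely, assume $\lie(\Gamma)=\mathfrak{g}$. I would assemble three properties of $\mathcal{A}$. First, right-invariance gives the \emph{semigroup} property $\mathcal{A}\cdot\mathcal{A}\subseteq\mathcal{A}$, since concatenating an $\mathcal{A}$-trajectory with a right-translated one realizes the product of endpoints. Second, $\lie(\Gamma)=\mathfrak{g}$ is exactly the accessibility rank condition, so by the orbit theorem (Chow--Krener theory) the reachable set has nonempty interior; set $U:=\mathrm{int}\,\mathcal{A}\neq\emptyset$. Third, I claim $\overline{\mathcal{A}}=G$: the closure $\overline{\mathcal{A}}$ is a closed subsemigroup of the \emph{compact} group $G$, and a closed subsemigroup of a compact group is a subgroup. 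Indeed, for $s\in\overline{\mathcal{A}}$ compactness yields a subsequence $s^{n_k}\to h$, whence $s^{n_{k+1}-n_k}\to I$ and $s^{n_{k+1}-n_k-1}\to s^{-1}$ with exponents $\geqslant 1$, so $I,s^{-1}\in\overline{\mathcal{A}}$. Being a subgroup with nonempty interior, $\overline{\mathcal{A}}$ is open, hence closed and open, so connectedness of $G$ forces $\overline{\mathcal{A}}=G$. Finally I combine these: for arbitrary $g\in G$, the open set $gU^{-1}$ meets the dense set $\mathcal{A}$ at some $a=gu^{-1}$ with $u\in U\subseteq\mathcal{A}$, so $g=au\in\mathcal{A}\cdot\mathcal{A}\subseteq\mathcal{A}$. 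Hence $\mathcal{A}=G$ and the system is controllable.

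The routine part is necessity, which reduces to the subalgebra--subgroup correspondence. The delicate step is the sufficiency, specifically the upgrade from \emph{accessibility} (nonempty interior of $\mathcal{A}$) to \emph{exact controllability} in the presence of the drift term $B_0$. This is where compactness of $G$ is indispensable: the recurrence of one-parameter subgroups in a compact group is what makes the closed reachable semigroup a subgroup and lets one effectively ``reverse'' the drift. Without compactness the drift may be irreversible and the equivalence can fail, so I would flag the closed-subsemigroup-is-a-subgroup lemma as the true crux of the argument.
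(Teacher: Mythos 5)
Your proposal is correct, but there is nothing in the paper to compare it against: the paper never proves Theorem~\ref{thm:LARC}, it simply quotes it from the literature (``see \cite{Brockett72}''). What you have reconstructed, essentially from scratch, is the classical Jurdjevic--Sussmann/Brockett argument for right-invariant systems on a compact connected Lie group, and all three of your ingredients are sound: (i) right-invariance makes the reachable set $\mathcal{A}$ from $I$ a subsemigroup and makes ``$\mathcal{A}=G$'' the right reading of controllability; (ii) $\lie(\Gamma)=\mathfrak{g}$ gives $\mathrm{int}\,\mathcal{A}\neq\emptyset$ by Krener's theorem (the right-invariant vector fields have full rank everywhere); (iii) a closed subsemigroup of a compact group is a subgroup, so $\overline{\mathcal{A}}$ is a clopen subgroup and connectedness gives $\overline{\mathcal{A}}=G$, after which your density-meets-interior trick ($g U^{-1}\cap\mathcal{A}\neq\emptyset$) correctly upgrades density to exact controllability. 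You are also right that (iii) is the only place compactness enters and is what lets the drift be ``reversed''; this is precisely why the paper, in the Remark following Theorem~\ref{thm:induced.proper.action.rank.condition}, restricts its drifted statements to compact $G$ or to drifts with periodic flow. One point you should make explicit in the necessity direction: from $\mathcal{A}\subseteq H$ and $\mathcal{A}=G$ you conclude $H=G$ and then $\lie(\Gamma)=\mathfrak{g}$; the missing justification is that a connected immersed subgroup of dimension strictly less than $\dim G$ is a countable union of lower-dimensional embedded submanifolds, hence has measure zero (in particular empty interior) in $G$ and cannot equal $G$ --- without this, an immersed subgroup can still be dense (an irrational line in a torus), so equality of sets, not mere density, is what forces equality of dimensions.
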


\bibliographystyle{IEEEtran}
\bibliography{induced,SOn}

\end{document}